\newcommand*{\addFileDependency}[1]{% argument=file name and extension
	\typeout{(#1)}
	\@addtofilelist{#1}
	\IfFileExists{#1}{}{\typeout{No file #1.}}
}
\newcommand*{\myexternaldocument}[1]{%
	\externaldocument{#1}%
	\addFileDependency{#1.tex}%
	\addFileDependency{#1.aux}%
}
\def\I{{\mathbb I}}
\def\P{{\mathbb P}}
\def\pr{\mathbb{P}}
\def\X{{\mathbf{X}}}
\def\x{{\mathbf{x}}}
\def\z{{\mathbf{z}}}
\def\y{{\mathbf{y}}}
\def\Y{{\mathbf{Y}}}
\def\z{{\mathbf{z}}} 
\def\E{{\mathbb E}}
\def\PIT{{\text{PIT}}}
\def\HPD{{\text{HPD}}}
\newcommand{\gvn}{|}
\newtheorem{thm}{Theorem}
\newtheorem{Corollary}{Corollary}
\newtheorem{Definition}{Definition}
\newtheorem{Remark}{Remark}
\newcommand{\codecomment}[1]{\textbf{\color{black}// #1}}
\newif\ifdraft
\newcommand{\annlee}[1]{\ifdraft \textbf{\color{magenta}[Ann: #1]} \else \fi} % new comments
\title{Diagnostics for Conditional Density Models and Bayesian Inference Algorithms}
\author[1]{\href{mailto:David Zhao <davidzhao@stat.cmu.edu>?Subject=Your UAI 2021 paper}{David~Zhao}{}} % Lead author
\author[1]{Niccol\`o~Dalmasso}
\author[2]{Rafael~Izbicki}
\author[1]{Ann~B.~Lee}
\affil[1]{%
	Department of Statistics \& Data Science\\
	Carnegie Mellon University\\
	Pittsburgh, Pennsylvania, USA
}
\affil[2]{%
	Department of Statistics\\
	Federal University of S\~{a}o Carlos (UFSCar)\\
	S\~{a}o Carlos, Brazil
}
\begin{document}
\maketitle

%\david{5/28 New draft that incorporates changes we discussed (all edits under Track Changes), please take a look. (By the way, we are currently exactly at 9 page limit without comments, which you can see by setting "drafttrue" to "draftfalse" above.)}

\begin{abstract} %\annlee{6/8, 5 pm, done editing for now}
	There has been growing interest in the AI community for precise uncertainty quantification. Conditional density models $f(y|\x)$, where $\x$ represents  potentially high-dimensional features, are an integral part of uncertainty quantification in prediction and Bayesian inference. However,  it is challenging to assess conditional density estimates and gain insight into modes of failure. While existing  diagnostic tools can determine whether an approximated conditional density  is compatible overall with a data sample, they lack  a principled framework for identifying, locating, and interpreting the nature of   statistically significant discrepancies over the entire feature space. In this paper, we present  rigorous and easy-to-interpret diagnostics such as (i) the ``Local Coverage Test'' (LCT), which   distinguishes an arbitrarily misspecified model from the true conditional density of the sample, and (ii) ``Amortized Local P-P plots'' (ALP) which can quickly provide interpretable graphical summaries of distributional differences at any location $\x$ in the feature space. Our validation procedures scale to high dimensions and can potentially adapt to any type of data at hand. We demonstrate the effectiveness of LCT and ALP through a simulated experiment and  applications to prediction and parameter inference for image data. 
	
\end{abstract}

\section{Introduction}\label{sec:intro}
%\annlee{6/7, 10:40 pm. Done for now. Did further revisions after David added the text on two-sample tests. Please check if the paper is still strong enough; I removed some of our previous claims because of the Jittkrittum et al 2020 paper - though it may not be entirely fair for us because of some of the points we discussed before.}\ann{6/8, 8:30 am, did some further editing}
%\remove{There has been growing interest in the AI community for precise uncertainty quantification in predictive modeling. Conditional density models play a key role in uncertainty quantification.} 
There has been growing interest in the AI community for precise uncertainty quantification (UQ), with conditional density models playing a key role in UQ in prediction and Bayesian inference. For instance, the conditional density  $f(y|\x)$ of the response variable $y$ given features $\x$ can be used to build predictive regions for $y$, which are more informative than point predictions. Indeed, in prediction settings, $f$ provides a full account of the uncertainty in the outcome $y$ given new observations $\x$. Conditional densities are also central to Bayesian parameter inference, where the posterior distribution  $f(\theta|\x)$ is key to quantifying
uncertainty about the parameters $\theta$ of interest after observing data $\x$. %Often the posterior $f$ is not tractable, so computational methods such as Markov chain Monte Carlo \citep{brooks2011handbook} are used to produce samples from an approximation of $f$ at the observed value of $\x$.

Recently, a large body of work in machine learning has been developed for estimating conditional densities $f$ for all possible values of $\x$, or to generate predictions that follow the unknown conditional density (see \citealt{uria2014density,sohn2015generativecde, papamakarios2017maf, deisenroth2016gpcde, papamakarios2019normalizing} and references therein). With the advent of high-precision data and simulations, simulation-based inference (SBI; \cite{cranmer2020sbi}) has also played a growing role in disciplines ranging from  physics, chemistry and engineering to the biological and social sciences.  The SBI category includes %both classical MCMC-based methods for approximating posteriors (e.g., ABC; \cite{sisson2018handbook}), as well as 
machine-learning based methods to learn an explicit surrogate model of the posterior \citep{marin2016abcrf, papamakarios2016snpea, lueckmann2017snpeb, chen2019gaussiancopula, izbicki2019abc, greenberg2019posterior}. %\rafael{need to unblind references} \david{5/18 this is done}

%\remove{However, there is a lack of methods for checking if the fitted conditional density models are correct. Large AI models, such as deep generative autoregressive models or Bayesian networks, are typically fit using global loss functions like the Kullback-Leibler divergence or the $L^2$ loss  \citep{izbicki2017photo, rothfuss2019cdepractices}. Loss functions are useful for training models but only provide \textit{relative} comparisons of model fit. Hence, it is difficult for practitioners to answer questions like, ``Am I confident my predictive model fits well, or ‘well enough’ after a certain amount of training?'' The loss function alone may not tell a practitioner whether to keep looking for better models (using larger training samples, training times, etc.), or if the %\annlee{4/15 The following statement is very true, though ML researchers may not understand what we mean unless we give concrete examples -- this is currently missing in our work. The reader can't see the connection between toy examples and real applications, probably because they don't use diagnostics themselves or feel a need for anything beyond loss functions.}
%current estimate is ``close enough'' to the true density.} %\rafael{5/16 as discussed, we should also add something about the need of interpretable diagnostics in the sciences (to increase trust etc)}.
%Inevitably,
%Validating such models can be challenging, especially for a high-dimensional feature vector or mixed-type data $\x$.

Inevitably, any downstream analysis in predictive modeling or Bayesian inference depends on 
the trustworthiness of the assumed conditional density model. 
Validating such models can 
be challenging, especially for high-dimensional or mixed-type data $\x$. There does not currently exist a comprehensive and rigorous 
set of diagnostics that describe, for all %possible
values of $\x$, the quality of fit of a conditional density model.
% To  date, we do not have a comprehensive and rigorous process of testing,  for all possible values of $\x$, whether a conditional density model $\widehat f(y|\x)$ fits actually observed data.
%or equivalently (in the Bayesian framework) if an approximation $\widehat f(\theta|\x)$ of the posterior fits the true posterior given by the model. 
%\david{6/7 edited to emphasize diagnostics instead of tests. but maybe don't need this sentence}

\textbf{Related work.} 
%\annlee{I wonder if the division into relative versus absolute is good. Any comparison that has a metric would give an absolute scale, for example.}\david{6/7 Yeah I guess really it's a distinction between metrics whose absolute levels are meaningful (e.g. a p-value in a goodness of fit test) vs. not (e.g. a loss function value which, while of course it can be computed, does not signify anything practically useful except when compared with other loss function values}
Large AI models, such as deep generative autoregressive models or Bayesian networks, are typically fit using global loss functions like the Kullback-Leibler divergence or the $L^2$ loss  \citep{izbicki2017photo, rothfuss2019cdepractices}. Loss functions are useful for training models but only provide relative comparisons of overall model fit. Hence, a practitioner may not  know whether he or she should keep looking for better models  (using larger training samples, training times, etc.), or if the current estimate is ``close enough''. 
%\david{6/7 references for two-sample tests, to discuss whether to add} %\annlee{Maybe as before ``Another line of work assesses goodness-of-fit of a conditional density model via a two-sample test comparing samples from...''.}
Another line of work assesses goodness-of-fit of a conditional density model via a two-sample test that compare samples from $\widehat f$ and $f$. Earlier tests involve a conditional version of the standard Kolmogorov test \citep{andrews1997condtest, zheng2000condtest} in one dimension, or are tailored to specific families of conditional densities \citep{stute2002condtest, moreira2003condtest}.  Recently, \cite{jitkrittum2020cde} developed a fast kernel-based approach that can also identify local regions of poor fit. While these tests are consistent,
%against any fixed alternative conditional density model $\widehat f$, 
they do not provide insight on how the distributions of $\widehat f$ and $f$ differ locally. Kernel approaches also require the user to 
specify an appropriate kernel  and tuning parameters, which can be challenging in practice. Finally, existing diagnostics that do describe the nature of inconsistencies between $\widehat f$ and $f$ only test for a form of overall coherence between a data-averaged conditional (posterior) distribution and its marginal (prior) distribution. Typically, they compute probability integral transform (PIT) values \citep{cook2006validating,freeman2017photoz,talts2018validating,disanto2018cmdn}. While informative, these diagnostics were originally developed for assessing \textit{unconditional} density models \citep{gan1990pit}. As such, they are known to fail to detect some clearly misspecified conditional models including models that ignore the dependence on the covariates altogether  \citep{schmidt2020photoz}. %\annlee{cite Schmidt et al 2020 here?}.
(Our Theorem \ref{thm:uniform} %precisely formalizes the \annlee{Our .... details different...? [it's not a theorem of all failure models]} 
details different failure modes of existing diagnostics.)

\textbf{Contribution and novelty.} Our work provides diagnostic tools for UQ and calibration of predictive models that provide insight in simple, explainable terms like coverage, bias, dispersion, and multimodality in $y$ (output of interest) as a function of $\x$ (observed inputs). Having interpretable diagnostics is crucial for scientific collaborators and end users to build trust in large AI models.

Existing diagnostics for conditional density models cannot detect every kind of misspecified model and give insight into local quality of fit at any given $\x$. Our method quantifies deviations between actual and nominal coverage in $y$. It (i) detects arbitrarily misspecified models and (ii) assesses and visualizes quality of fit anywhere in feature space, even at points without observed data, in terms of easy-to-explain diagnostics. To the best of our knowledge, no other method in the literature provides both consistency and diagnostics for complex high-dimensional data. 
%\annlee{Can we still say this?} \david{6/7 I guess it depends on whether Jitkrittum's RHKS is large enough to include any ``arbitrary'' models. Practically it might be? They seem to claim that. Technically, we're also restricted to models for which the coverage varies ``smoothly'' in the feature space, so that our regression is meaningful. Anyway, at least (ii) is novel.} 
%To the best of our knowledge, (i) and (ii) are entirely novel in the literature.}

%\textbf{Organization.}
%\remove{In this paper, we establish a diagnostic framework and terminology specifically designed for assessing the quality of fit of conditional density estimators (CDEs) in feature space.} 
To enrich our vocabulary for desired properties of CDEs, we begin our paper by defining global and local consistency (see Definitions \ref{def:global_cons} and \ref{def:local_cons}, respectively). We then describe our diagnostic framework, which %efficiently estimate the coverage of a conditional density model $\widehat f(\cdot | \x)$ at every location $\x$ in the feature space.
has three main components:
\begin{itemize}
	\vspace{-1mm}
	\item \textbf{[GCT - Global Coverage Test]} A statistical hypothesis test that can distinguish \emph{any} misspecified density model from the true conditional density.  (This is a test of global consistency.) 
	\item \textbf{[LCT - Local Coverage Test]}  A statistical hypothesis test that identifies \emph{where} in the feature space the model fits poorly. (This is a test of local consistency.) 
	\item \textbf{[ALP - Amortized  Local  P-P  plots]} Interpretable graphical summaries of the fitted model that show \emph{how} it deviates from the true
	density  at  any  location  in  feature space (see Figure \ref{fig:PPplot_interpret} for examples). %\add{
	We also provide amortized PIT histograms that contain the same information as ALPs but in a different format (see Appendix C for details).
\end{itemize}
Our diagnostics are easy and fast to compute, and can identify, locate, and interpret the nature of  (statistically significant) discrepancies over the entire feature space.  At the heart of our approach is the realization that the local coverage of a CDE model is itself a conditional probability (see Equation \ref{eq:r_alpha}) that often varies smoothly with $\x$. Hence, we can estimate the local coverage at any given $\x$ by leveraging a suitable regression method using sample points in a neighborhood of $\x$. Thanks to the impressive arsenal of existing regression methods, we can adapt to different types of potentially high-dimensional data to obtain computationally and statistically efficient validation. Finally, because we specifically evaluate local coverage (rather than other types of discrepancies), the practitioner can ``zoom in'' on statistically significant local discrepancies flagged by the LCT, and identify common modes of failure in the fitted conditional density (see Figures \ref{fig:toy_example1_badfit}-\ref{fig:galsim_example} for examples). %\david{6/30: All code used to produce our experiments can be found here: https://github.com/zhao-david/CDE-diagnostics. We have also developed an installable Python package \texttt{cde-diagnostics} that contains the tools presented in this paper.}

%\add{
All code used to produce our experiments is available at \url{https://github.com/zhao-david/CDE-diagnostics}. We have also included an installable Python package \texttt{cde-diagnostics} with a detailed tutorial.
%\textbf{Organization.}\rafael{6/8 maybe this could be removed if we are short in space } This paper is organized as follows: in Section \ref{sec:existing_diag_insens}, we define desired properties of goodness-of-fit tests, and show how existing diagnostics may not satisfy them. In Section \ref{sec:new_diagnostics}, we introduce our new diagnostic framework with GCT, LCT, and ALP. In Section \ref{sec:example1}, we demonstrate how our framework can diagnose omitted variable bias of CDEs in a prediction setting. Section \ref{sec:example2} considers the problem of assessing estimated distributions of redshift (a proxy for distance) given high-dimensional galaxy images.
%In Section \ref{sec:example3}, we use our diagnostics to validate Bayesian posteriors fit to simulated galaxy images.

\section{Existing Diagnostics are Insensitive to Covariate Transformations}
\label{sec:existing_diag_insens}

{\bf Notation.} Let $\mathcal{D}=\{(\X_1,Y_1),\ldots,(\X_n,Y_n)\}$ denote an i.i.d. sample from  $F_{\X,Y}$, the joint distribution of $(\X,Y)$ for a random variable $Y \in \mathcal{Y} \subseteq \mathbb{R}$  (in Section \ref{sec:multivariate}, $Y$ is multivariate), and a random vector $\X \in \mathcal{X} \subseteq \mathbb{R}^d$. In a prediction setting, $\mathcal{D}$ represents a hold-out set not used to train $\widehat f$. In a Bayesian setting, $Y$ represents the parameter of interest (sometimes also denoted with $\theta$), and each element of $\mathcal{D}$ is obtained by first drawing $Y_i$ from the prior distribution, and then drawing $\X_i$ from the statistical model of $\X|Y_i$.

%\david{6/7 attempted to make this clearer} \remove{Given any fixed conditional density model $\widehat{f}(y|\x)$, our goal is to check whether that model is a good estimate of the true density $f(y|\x)$.} 
Ideally, a test should be able to distinguish \textit{any} given 
%\annlee{should if be `any fixed' or `any given'?}
alternative conditional density model $\widehat{f}(y|\x)$ from the true density $f(y|\x)$, as well as locate discrepancies in the feature space $\mathcal{X}$. More precisely, a test should be able to identify what we in this section define as global and local consistency.
\begin{Definition}[\textbf{Global Consistency}]
	\label{def:global_cons}
	An estimate $\widehat f(y|\x)$ is globally consistent with the density 
	$ f(y|\x)$
	if the following null hypothesis holds:
	\begin{align}
	\label{eq:global_null}
	H_0:\widehat f(y|\x)= f(y|\x) \mbox{ for every } \x \in \mathcal{X} \mbox{ and } y \in \mathcal{Y}.
	\end{align}
\end{Definition}
\vspace{-0.25cm}
Note that $\widehat f$ is a particular fixed conditional density estimate, and we test whether samples from $\widehat f$ are consistent with samples from $f$.
%\rafael{maybe use "estimate" instead of "model"}\nic{I like it!}
Existing diagnostics typically validate density models by computing PIT values on independent data, which were not used to estimate $\widehat f(y|\x)$:

\begin{Definition}[\textbf{PIT}] Fix $\x \in \mathcal{X}$ and  $y \in \mathcal{Y}$.
	The probability integral transform of  $y$ at $\x$, as modeled by the conditional density estimate $\widehat{f}(y|\x)$, is
	\begin{equation}
	\label{eq:pit}
	\PIT(y; \x) = \int_{-\infty}^{y} \widehat{f}(y' \gvn \x) dy'. 
	\end{equation}
\end{Definition}
\vspace{-0.25cm}
See Figure \ref{fig:pit_hpd_visualization}, top panel for an illustration of this calculation.

\begin{Remark}
	For implicit models of $\widehat f(y|\x)$
	(that is, generative models that via e.g. MCMC can sample from, but not directly evaluate  $\widehat f$), we can approximate the PIT values by forward-simulating data: For fixed $\x \in \mathcal{X}$ and  $y \in \mathcal{Y}$, draw $Y_1,\ldots,Y_L \sim \widehat{f}(\cdot|\x)$. Then, approximate $\PIT(y; \x)$ via  the cumulative sum $L^{-1} \sum_{i=1}^L  \I(y_i\leq y)$.
\end{Remark}

{\em If} the conditional density model $\widehat f(y|\x)$ is globally consistent, then the  \PIT\ values are uniformly distributed. More precisely, if $H_0$ (Equation \ref{eq:global_null}) is true, then  the random variables
$\PIT(Y_1; \X_1),\ldots,\PIT(Y_n; \X_n) \stackrel{i.i.d.}{\sim} \textrm{Unif}(0,1)$. This result is often used to test goodness-of-fit of conditional density models in practice \citep{cook2006validating, bordoloi2010photoz, tanaka2018photoz}. 

Our first point is that unfortunately, such random variables can be uniformly distributed even if global consistency does not hold. This is shown in the following theorem. 
\begin{thm}[\textbf{Insensitivity to Covariate Transformations}]
	\label{thm:uniform}
	Suppose there exists a function $g:\mathcal{X} \longrightarrow \mathcal{Z}$, where $\mathcal{Z} \subseteq \mathbb{R}^k$ for some $k$, that satisfies
	\begin{equation}
	\widehat f(y|\x) = f(y|g(\x)).  \label{eq:reduction}
	\end{equation}
	Let $(\X,Y) \sim F_{\X,Y}$. Then $\PIT(Y;\X) \sim \textrm{Unif}(0,1)$.
\end{thm}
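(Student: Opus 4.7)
The plan is to reduce the statement to the classical one-dimensional PIT result, but applied to the coarsened pair $(\Z, Y)$ with $\Z := g(\X)$ rather than to $(\X, Y)$ directly. Under $F_{\X,Y}$, the joint law of $(\Z, Y)$ is well-defined, and I would denote by $f_{Y|\Z}(y|\z)$ the corresponding true conditional density of $Y$ given $\Z=\z$, with CDF $F_{Y|\Z}(y|\z)$. The right-hand side of assumption (\ref{eq:reduction}) is naturally read as $f(y|g(\x)) = f_{Y|\Z}(y|g(\x))$: that is, $\widehat{f}(\cdot|\x)$ is the true conditional density of $Y$ given only the reduced covariate $g(\X) = g(\x)$.

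With this identification, substituting into the definition of the PIT gives
\begin{align*}
\PIT(Y;\X) = \int_{-\infty}^Y \widehat f(y'|\X)\,dy' = \int_{-\infty}^Y f_{Y|\Z}(y'|\Z)\,dy' = F_{Y|\Z}(Y|\Z).
\end{align*}
By the classical PIT result applied to the pair $(\Z, Y)$, we have $F_{Y|\Z}(Y|\Z) \sim \textrm{Unif}(0,1)$: conditional on $\Z = \z$, $Y$ has continuous CDF $F_{Y|\Z}(\cdot|\z)$, so $F_{Y|\Z}(Y|\z) \sim \textrm{Unif}(0,1)$, and this law is preserved when we integrate over $\Z$.

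The main obstacle is conceptual rather than technical: one has to pin down what $f(y|g(\x))$ actually denotes, since $\widehat f$ is defined on $\mathcal{X}$ while its ``true'' counterpart lives on $\mathcal{Z} = g(\mathcal{X})$. Once this identification is made precise (assuming, e.g., measurability of $g$ and existence of a regular conditional density of $Y$ given $\Z$), the argument collapses to a one-line application of the standard PIT theorem. The interpretive takeaway, worth stating alongside the proof, is that uniform PIT values can arise whenever $\widehat f(y|\x)$ is a bona fide conditional density with respect to any sigma-algebra strictly coarser than $\sigma(\X)$, which is exactly the failure mode motivating the finer diagnostics developed in Section \ref{sec:new_diagnostics}.
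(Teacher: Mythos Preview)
Your proposal is correct and follows essentially the same route as the paper's proof: both set $\Z=g(\X)$, identify $\PIT(Y;\X)=\widehat F(Y|\X)=F_{Y|\Z}(Y|\Z)$ via assumption~(\ref{eq:reduction}), and then conclude uniformity by conditioning on $\Z=\z$ (where the classical one-dimensional PIT gives $F_{Y|\Z}(Y|\z)\sim\textrm{Unif}(0,1)$) and integrating out $\z$. The paper writes out this last conditioning-and-marginalization step explicitly as $\int_{\mathcal Z} F(F^{-1}(a|z)|z)\,f(z)\,dz=a$, whereas you package it as an invocation of the ``classical PIT result for $(\Z,Y)$''; this is a cosmetic difference, not a methodological one.
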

\vspace{-0.25cm}
Many models naturally lead to estimates that  could satisfy the condition in Equation \ref{eq:reduction}, even  without being  globally consistent.  In fact, clearly misspecified models $\widehat f$ can yield uniform PIT values and ``pass'' an associated goodness-of-fit test regardless of the sample size.  For example: if $\widehat f(y|\x)$ is based on  a linear model, then  $\widehat f(y|\x)$ will by construction depend on $\x \in \mathbb{R}^d$ only through $g(\x):=\beta^T \x$ for some $\beta \in \mathbb{R}^d$. As a result, we could have $\widehat f(y|\x) = f(y|g(\x))$ even when $\widehat f(y|\x)$ is potentially very different from  $f(y|\x)$. 
%(Notice that $g(\x) \in \mathbb{R}$ is of much lower dimension than $\x \in \mathbb{R}^d$.)
As another example, a conditional density estimator that performs variable selection \citep{shiga2015cdeauto, izbicki2017converting, dalmasso2020conditional} 
could satisfy  $\widehat f(y|\x) = f(y|g(\x))$ for $g(\x):= (\x)_S$, where $S \subset \{1,\ldots,d\}$ is a subset of the covariates.  A test of the overall uniformity of \PIT \ values is no guarantee that we are correctly modeling the relationship between $y$ and the predictors $\x$; see Figure \ref{fig:PIT_fail} for an illustration.

Our second point is that current diagnostics also do not pinpoint the locations in feature space $\mathcal{X}$ where the estimates of $f$ should be improved. Hence, in addition to global consistency, we need diagnostics that test the following property:
\begin{Definition}[\textbf{Local Consistency}]
	\label{def:local_cons}
	Fix $\x \in \mathcal{X}$. An estimate $\widehat f(y|\x)$ is locally consistent  with the density 
	$ f(y|\x)$ at fixed $\x$
	if the following null hypothesis holds:
	\begin{align}
	\label{eq:local_null}
	H_0(\x):\widehat f(y|\x)= f(y|\x) \mbox{ for every } y \in \mathcal{Y}.
	\end{align}
\end{Definition}
\vspace{-0.25cm}
In the next section, we introduce new diagnostics that are able to test whether a conditional density model $\widehat f$ is both globally and locally consistent with the underlying conditional distribution $f$ of the data. Our diagnostics are still based on PIT, and hence retain the properties (e.g., interpretability, ability to provide graphical summaries, and so on) that have made PIT a popular choice in model validation.

\section{New Diagnostics Test Local and Global Consistency}
\label{sec:new_diagnostics}

Our new diagnostics rely on the following key result:
\begin{thm}[\textbf{Local Consistency and Pointwise Uniformity}]
	\label{thm:local_uniform}
	For any $\x \in \mathcal{X}$, the local null hypothesis $H_0(\x):\widehat{f}(\cdot|\x)=f(\cdot|\x)$ holds if, and only if, the distribution of $\PIT(Y;\x)$ given $\x$ is uniform over $(0,1)$. 
\end{thm}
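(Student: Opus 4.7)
The plan is to reduce the statement to the classical probability integral transform, applied \emph{conditionally on} $\X=\x$. Let $F(\cdot\mid\x)$ denote the true conditional CDF of $Y$ given $\X=\x$, and let $\widehat F(\cdot\mid\x)$ denote the CDF associated with $\widehat f(\cdot\mid\x)$, so that by definition $\PIT(Y;\x)=\widehat F(Y\mid\x)$. Throughout I will assume the standard regularity used implicitly in the paper, namely that $\widehat f(\cdot\mid\x)$ and $f(\cdot\mid\x)$ are densities with respect to Lebesgue measure on $\mathcal{Y}$, so that the corresponding CDFs are continuous.

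For the forward direction ($\Rightarrow$), if $\widehat f(\cdot\mid\x)=f(\cdot\mid\x)$ then $\widehat F(\cdot\mid\x)=F(\cdot\mid\x)$. The conditional distribution of $Y$ given $\X=\x$ has CDF $F(\cdot\mid\x)$, so by the classical PIT the random variable $F(Y\mid\x)$ given $\X=\x$ is uniform on $(0,1)$; thus $\PIT(Y;\x)\mid\X=\x$ is $\mathrm{Unif}(0,1)$.

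For the converse ($\Leftarrow$), suppose that the conditional law of $\PIT(Y;\x)=\widehat F(Y\mid\x)$ given $\X=\x$ is $\mathrm{Unif}(0,1)$. For any $u\in(0,1)$, define the generalized inverse $\widehat F^{-1}(u\mid\x)=\inf\{y:\widehat F(y\mid\x)\ge u\}$. Since $\widehat F(\cdot\mid\x)$ is continuous and nondecreasing, the event $\{\widehat F(Y\mid\x)\le u\}$ coincides (up to a null set under the true conditional law, as long as $F(\cdot\mid\x)$ puts no mass on a flat segment of $\widehat F(\cdot\mid\x)$) with $\{Y\le \widehat F^{-1}(u\mid\x)\}$. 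Uniformity then yields
\begin{equation}
F\bigl(\widehat F^{-1}(u\mid\x)\,\big|\,\x\bigr)=u\quad\text{for all }u\in(0,1),
\end{equation}
so $F(y\mid\x)=\widehat F(y\mid\x)$ for every $y$ in the range of $\widehat F^{-1}(\cdot\mid\x)$. Differentiating in $y$ gives $f(y\mid\x)=\widehat f(y\mid\x)$ almost everywhere, i.e. $H_0(\x)$ holds.

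The main obstacle, and the only point that requires care, is the converse when $\widehat f(\cdot\mid\x)$ vanishes on some interval, making $\widehat F(\cdot\mid\x)$ non-strictly increasing and its inverse only a generalized inverse. The fix is to argue that under the uniformity assumption the true conditional distribution cannot place mass on any such interval: if it did, $\widehat F(Y\mid\x)$ would put a positive atom at the corresponding value of $\widehat F$, contradicting uniformity. Once flat segments are excluded on the support of $F(\cdot\mid\x)$, the identity $F=\widehat F$ on that support is immediate from the displayed equation, and the densities agree almost everywhere as claimed. The forward direction and the algebraic core of the converse are essentially one-liners; the bookkeeping around supports and generalized inverses is the only nontrivial step.
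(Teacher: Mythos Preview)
Your proof is correct and follows essentially the same route as the paper: apply the classical probability integral transform conditionally on $\x$ for the forward direction, and for the converse derive $F(\widehat F^{-1}(u\mid\x)\mid\x)=u$ and conclude that the two CDFs (hence densities) coincide. If anything, you are more careful than the paper, which silently assumes $\widehat F(\cdot\mid\x)$ is strictly increasing; your discussion of flat segments and the atom they would induce under $F$ fills exactly the gap the paper leaves implicit.
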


Theorem \ref{thm:local_uniform} implies that if we had a sample of $Y$'s at the fixed location $\x$, we could test the local consistency (Definition \ref{def:local_cons}) of $\widehat f$ by determining whether the sample's PIT values come from a uniform distribution. In addition, for global consistency we need local consistency at every $\x \in \mathcal{X}$. Clearly, such a testing procedure would not be practical: typically, we have data of the form  $(\X_1,Y_1)
,\ldots,( \X_n,Y_n)$ with at most one observation at any given $\x \in \mathcal{X}$.

Our solution is to instead address this problem as a regression: for fixed $\alpha \in (0,1)$, we consider the cumulative distribution function (CDF) of PIT at $\x$,
\begin{equation}
\label{eq:r_alpha}
r_{\alpha}(\x) := \pr\left(\PIT(Y;\x) < \alpha|\x \right),
\end{equation}
which is the regression of the random variable $W^{\alpha}:=\I(\PIT(Y;\X)< \alpha)$ on $\X$. 

From Theorem \ref{thm:local_uniform}, it follows that the estimated density is locally consistent at $\x$ if and only if $r_{\alpha}(\x) = \alpha$ for every $\alpha$:
\begin{Corollary}
	\label{cor:diagnostics}
	Fix $\x \in \mathcal{X}$. Then $r_{\alpha}(\x)=\alpha$ for every $\alpha \in (0,1)$ if, and only if, $\widehat{f}(y|\x)=f(y|\x)$ for every $y \in \mathcal{Y}$.
\end{Corollary}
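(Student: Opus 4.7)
The plan is to deduce the corollary almost immediately from Theorem~\ref{thm:local_uniform}, by recognizing $r_\alpha(\x)$ as the conditional CDF of $\PIT(Y;\x)$ given $\x$, evaluated at $\alpha$. Concretely, from Equation~\ref{eq:r_alpha}, the map $\alpha \mapsto r_\alpha(\x)$ is (up to the usual convention of using strict vs.\ non-strict inequality, which is immaterial for a continuous distribution) the conditional CDF of $\PIT(Y;\x)$ given $\x$. Since the CDF of a $\mathrm{Unif}(0,1)$ random variable is exactly the identity function on $(0,1)$, the condition "$r_\alpha(\x) = \alpha$ for every $\alpha \in (0,1)$" is just the statement that this conditional CDF agrees with the uniform CDF on $(0,1)$.

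First I would prove the forward direction ($r_\alpha(\x) = \alpha$ for all $\alpha \Longrightarrow \widehat f(\cdot|\x) = f(\cdot|\x)$): assuming $r_\alpha(\x)=\alpha$ everywhere on $(0,1)$, the conditional distribution of $\PIT(Y;\x)$ given $\x$ has the same CDF as $\mathrm{Unif}(0,1)$ on $(0,1)$, hence equals $\mathrm{Unif}(0,1)$. Theorem~\ref{thm:local_uniform} then gives local consistency at $\x$. The reverse direction ($\widehat f(\cdot|\x) = f(\cdot|\x) \Longrightarrow r_\alpha(\x)=\alpha$ for all $\alpha$) is equally short: Theorem~\ref{thm:local_uniform} guarantees that $\PIT(Y;\x) \mid \x \sim \mathrm{Unif}(0,1)$, so $r_\alpha(\x) = \P(\PIT(Y;\x) < \alpha \mid \x) = \alpha$ for every $\alpha \in (0,1)$.

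There is essentially no hard step here; the main (minor) care is in handling the strict inequality in Equation~\ref{eq:r_alpha} versus the usual non-strict CDF. This is benign because under local consistency the conditional law of $\PIT(Y;\x)$ is $\mathrm{Unif}(0,1)$, which is continuous and thus assigns zero mass to any point $\alpha$, so $\P(\PIT(Y;\x)<\alpha\mid\x)=\P(\PIT(Y;\x)\le\alpha\mid\x)=\alpha$. Conversely, if $r_\alpha(\x)=\alpha$ on all of $(0,1)$, the left-continuous conditional CDF $\alpha \mapsto \P(\PIT(Y;\x)<\alpha\mid\x)$ equals the identity, which forces the conditional law to be $\mathrm{Unif}(0,1)$ and hence also matches the ordinary (right-continuous) uniform CDF. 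The corollary then follows by invoking Theorem~\ref{thm:local_uniform}.
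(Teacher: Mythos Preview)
Your proposal is correct and follows essentially the same route as the paper: both recognize that $r_\alpha(\x)=\P(\PIT(Y;\x)<\alpha\mid\x)$ is the conditional CDF of $\PIT(Y;\x)$, so $r_\alpha(\x)=\alpha$ for all $\alpha$ is equivalent to conditional uniformity, and then invoke Theorem~\ref{thm:local_uniform}. Your extra remark about strict versus non-strict inequality is a harmless elaboration the paper simply omits.
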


Our new diagnostics are able to test for both local and global consistency. They rely on the simple idea of estimating $r_{\alpha}(\x)$  and then evaluating how much it deviates from $\alpha$ (see Section \ref{sec:our_tests}). Note that $$\PIT(Y;\x)<\alpha \iff Y \in (-\infty,\widehat q_\alpha(\x))$$ where $\widehat q_\alpha(\x)$ is the $\alpha$-quantile of $\widehat f$. That is, $r_\alpha(\x)$ %\annlee{7/20 should ``evaluates'' be ``assesses'' or some other word (since it's not defined as the coverage of $\hat{f}$)?}\david{7/21 I like 'assesses'} %evaluates
assesses the local
level-$\alpha$ \textbf{coverage} of $\widehat f$ at $\x$.  In Section \ref{sec:pp}, we explore the connection between test statistics and coverage, for interpretable descriptions of how conditional density models $\widehat f$ may fail to approximate the true conditional density $f$.

\subsection{Local and Global Coverage Tests}
\label{sec:our_tests}
Our procedure for testing local and global consistency is very simple and can be adapted to different types of data. For an i.i.d. test sample $(\X_1,Y_1),\ldots,(\X_n,Y_n)$ from $F_{\X,Y}$ (which was not used to construct $\widehat f$), we compute $W^{\alpha}_i:=\I( \PIT(Y_i;\X_i) < \alpha)$. To estimate the coverage $r_{\alpha}(\x)$ (Equation \ref{eq:r_alpha}) for any $\x \in \mathcal{X}$, we then simply regress $W$ on $\X$ using the transformed data   $(\X_1,W_1),\ldots,(\X_n, W_n)$. Numerous classes of regression estimators can be used, from kernel smoothers to random forests to neural networks.

To test local consistency  (Definition \ref{def:local_cons}), we introduce the {\em Local Coverage Test} (LCT) with the test statistic

\begin{equation*}
T(\x):=\frac{1}{|G|}\sum_{\alpha \in G} (\widehat r_{\alpha}(\x)-\alpha)^2,
\end{equation*} 
\vspace{-0.25cm}

where $\widehat r_{\alpha}$ denotes the regression estimator and $G$ is a grid of $\alpha$ values.
Large values of $T(\x)$ indicate a large discrepancy  between $\widehat f$ and $f$ at $\x$
in terms of coverage, and Corollary \ref{cor:diagnostics} links coverage to consistency. To decide on the correct cutoff for rejecting $H_0(\x)$, we use a Monte Carlo technique that simulates $T(\x)$ under $H_0$. Algorithm \ref{alg:p_values} details our procedure. For the LCT, note that we are performing multiple hypothesis tests at different locations $\x$. After obtaining LCT p-values, we advocate using a method like Benjamini-Hochberg to control the false discovery rate.

Similarly, we can also test global consistency (Definition \ref{def:global_cons}) with a Monte Carlo strategy. Algorithm \ref{alg:p_values_gct} in Supp. Mat. B. details our procedure. We introduce the {\em Global Coverage Test} (GCT) based on the following test statistic:

\vspace{-0.25cm}
\begin{equation*}
S:= \frac{1}{n} \sum_{i=1}^n T(\X_i).
\end{equation*}
\vspace{-0.25cm}

We recommend performing the global test first and, if the global null is rejected, investigating further with local tests.
Empirically, we have found that the power of our tests is related to the MSE (a measurable quantity) of the regression method we use. This observation is in line with similar results in \citet[Theorems 3.3 and 4.1]{kim2019globallocal}.
%[ADD REFERENCE to \annlee{EJS by Kim et al; Theorems 3.3 and 4.1?}].
%depends on the regression method we use. \annlee{6/7 need to reword: where's the evidence? (what experiments did we do?) Maybe ``Empirically, we have found that the power of our tests depends... This is consistent with a similar result in [cite Ilmun?]. Thus, as a rule of thumb, we choose...''} \david{6/7 Ok that makes sense to me} Empirically, we see that power is related to the MSE (a measurable quantity) of the chosen regression method. 
%\remove{Hence, as a rule of thumb, we choose the method  that has the smallest MSE on validation data.} 
Hence, as a practical strategy, we maximize power by choosing the regression model with the smallest MSE on validation data.

\setlength{\dbltextfloatsep}{14pt}
\begin{algorithm}[!ht]
	\caption{
		P-values for Local Coverage Test
	}\label{alg:p_values}
	\algorithmicrequire \ {\small  conditional density model $\widehat f$; test data $\{\X_i,Y_i\}_{i=1}^n$; test point $\x \in \mathcal{X}$; regression estimator $\widehat r$; %\add{
		grid $G$ of of $\alpha$ values in $(0,1)$; number of null training samples $B$}\\
	\algorithmicensure \ {\small estimated p-value $\widehat{p}(\x)$ for any $\x \in \mathcal{X}$}
	\begin{algorithmic}[1]
		\State \codecomment{Compute test statistic at $\x$:}
		\State Compute values $\PIT(Y_1;\X_1),\ldots,\PIT(Y_n;\X_n)$
		%\remove{\State $G \leftarrow$ grid of $\alpha$ values in $(0,1)$.}
		\For{$\alpha$ in $G$}
		\State Compute indicators $W^{\alpha}_1,\ldots,W^{\alpha}_n$
		\State Train regression method $\widehat r_{\alpha}$ on $\{\X_i, W^{\alpha}_i\}_{i=1}^n$
		\EndFor
		\State Compute test statistic $T(\x)$
		\State \codecomment{Recompute test statistic under null distribution:}
		\For{$b$ in $1,\ldots,B$}
		\State Draw $U_1^{(b)},\ldots,U_n^{(b)} \sim \textrm{Unif}[0,1]$.
		\For{$\alpha$ in $G$}
		\State Compute indicators $ \{ W^{(b)}_{\alpha,i} = \I( U_i^{(b)} < \alpha)\}_{i=1}^n $
		\State Train regression method $\widehat r_{\alpha}^{(b)}$ on $\{\X_i, W^{(b)}_{\alpha,i}\}_{i=1}^n$
		\EndFor
		\State Compute $\displaystyle T^{(b)}(\x):=\frac{1}{|G|}\sum_{\alpha \in G} (\widehat r_{\alpha}^{(b)}(\x)-\alpha)^2$
		\EndFor
		\\
		\textbf{return}  $\displaystyle \widehat{p}(\x):=\frac{1}{B}\sum_{b=1}^B\I\left(T(\x) < T^{(b)}(\x)\right)$
	\end{algorithmic}
\end{algorithm}

\subsection{Amortized local P-P plots}
\label{sec:pp}
Our diagnostic framework does not just give us the ability to identify deviations from local consistency in different parts of the feature space $\mathcal{X}$. It also provides us with insight into the nature of such deviations at any given location $\x$.
For unconditional density models, data scientists have long favored using P-P plots (which plot two cumulative distribution functions against each other) to assess how closely a density model agrees with actual observed data. What makes our work unique is that we are able to construct ``amortized local P-P plots'' (ALPs) with similar interpretations to assess {\em conditional} density models over the entire feature space.

Figure \ref{fig:PPplot_interpret} illustrates how a local P-P plot of $\widehat r_{\alpha}(\x)$ against $\alpha$ (that is, the estimated CDF against the true CDF at $\x$) can identify different types of deviations in a conditional density model. For example, positive or negative bias in the estimated density $\widehat f$ relative to $f$ leads to P-P plot values that are too high or too low, respectively. We can also easily identify overdispersion or underdispersion of $\widehat f$ from an ``S''-shaped P-P plot.

\begin{figure*}[!ht]
	\centering
	\textbf{\hspace{1cm} BIAS \hspace{7.2cm} DISPERSION}\par\medskip
	\vspace*{-1mm}
	\includegraphics[width=0.2\textwidth]{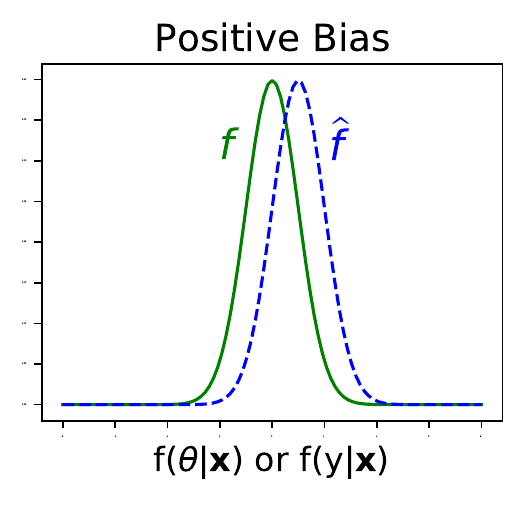}
	\hspace*{-4mm}
	\includegraphics[width=0.225\textwidth]{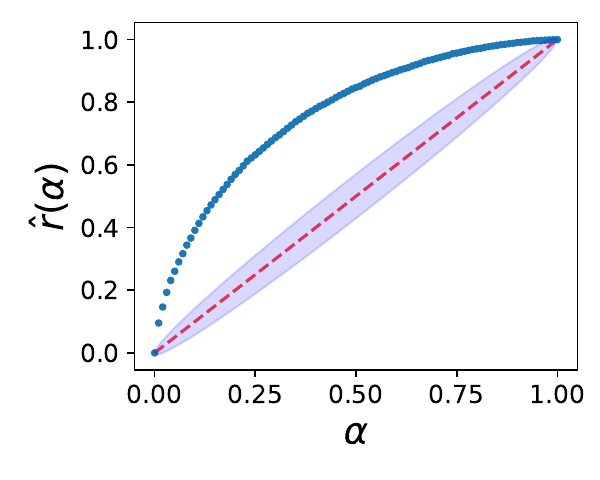}
	\hspace*{0.5cm}
	\includegraphics[width=0.2\textwidth]{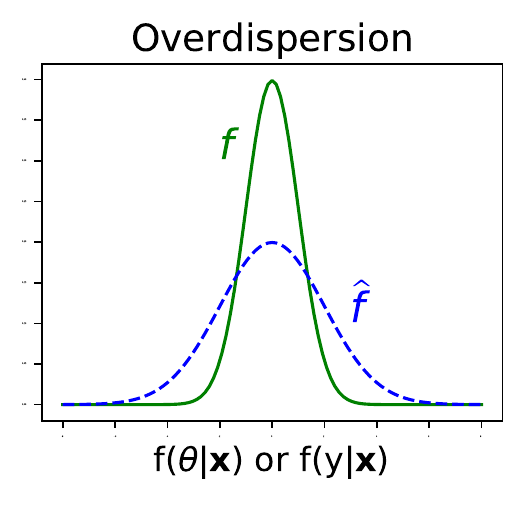}
	\hspace*{-4mm}
	\includegraphics[width=0.225\textwidth]{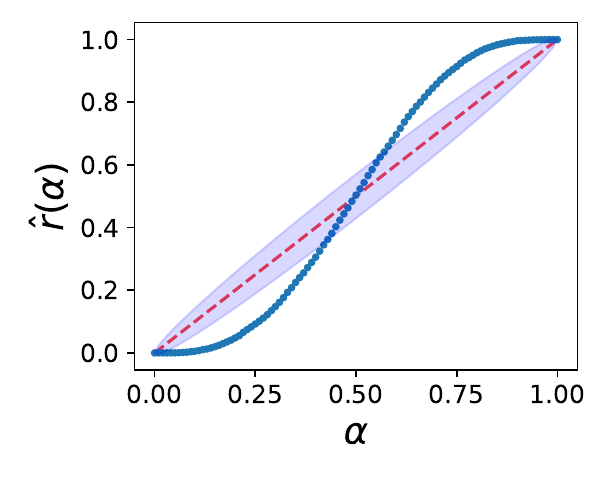}\\
	\includegraphics[width=0.2\textwidth]{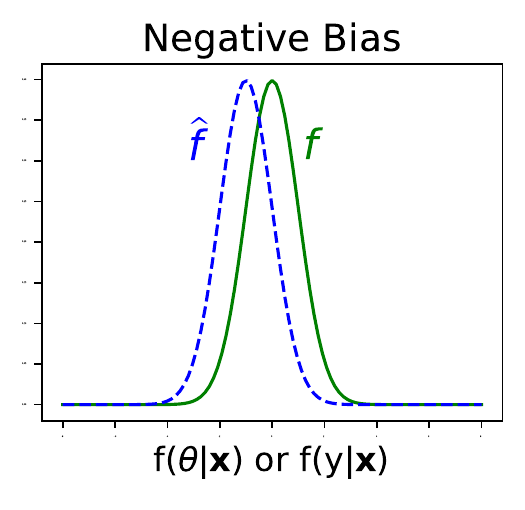}
	\hspace*{-4mm}
	\includegraphics[width=0.225\textwidth]{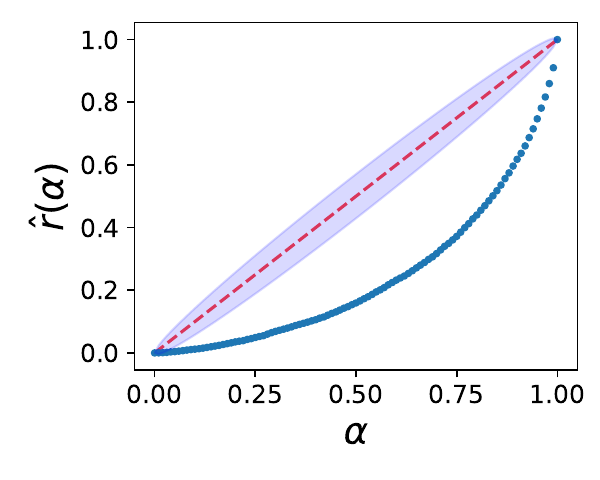}
	\hspace*{0.5cm}
	\includegraphics[width=0.2\textwidth]{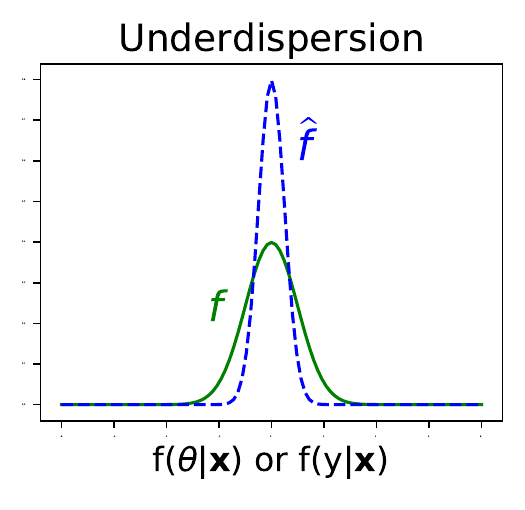}
	\hspace*{-4mm}
	\includegraphics[width=0.225\textwidth]{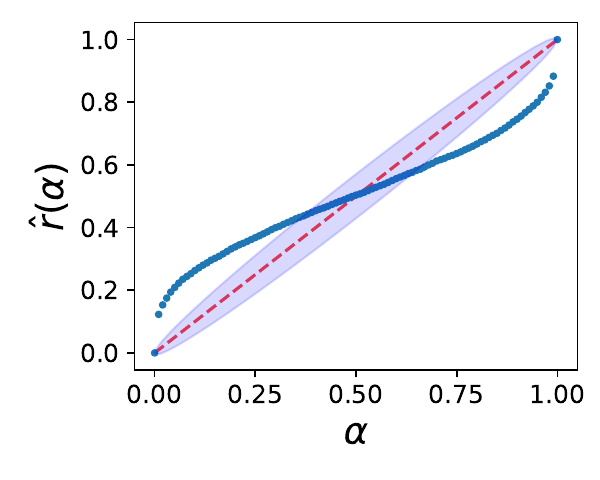}
	\caption{
		{\small P-P plots are commonly used to assess how well
			a density model fits actual data. Such plots display, in a clear and interpretable way, effects like bias (left panel) and dispersion (right panel) in an estimated distribution $\widehat f$ vis-a-vis 
			the true data-generating distribution $f$. Our framework yields a computationally efficient way to construct ``amortized local P-P plots''   for comparing conditional densities $\widehat f(\theta|\x)$ and $\widehat f(y|\x)$ at any location $\x$ of the feature space $\mathcal{X}$. See text for details and Sections \ref{sec:example1}-\ref{sec:example3} for examples.}
	}
	\label{fig:PPplot_interpret}
\end{figure*}

Of particular note is that our local P-P plots are  ``amortized'', in the sense that computationally expensive steps do not have to be repeated with e.g Monte Carlo sampling at each $\x$ of interest. Both the consistency tests in Section \ref{sec:our_tests} and the local P-P plots or ALPs only require initially training $\widehat r_{\alpha}$ on the observed data; the regression estimator can then be used to compute $\widehat r_{\alpha}(\x_{val})$ at any new evaluation point $\x_{val}$. Because of the flexibility in the choice of regression method, our construction also potentially scales to high-dimensional or different types of data $\x$. %\remove{Algorithm \ref{alg:null_sample} details the construction of ALPs, including how to compute confidence bands by a Monte Carlo algorithm.}
%\add{
Algorithm \ref{alg:null_sample} details the construction of confidence bands for ALPs (under the null) using a Monte Carlo algorithm.
As an alternative to ALPs, one can also visualize the same information in local PIT histograms; %\remove{computed on the  $\widehat r_{\alpha}$ values in each P-P plot}; 
see Algorithms 4-5
%Algorithm~\ref{alg:histograms}
in Appendix~C.

\begin{algorithm}[!ht]
	\caption{
		Confidence band for ALP under $H_0$
	}\label{alg:null_sample}
	\algorithmicrequire \ {\small test data $\{\X_i\}_{i=1}^n$; test point $\x \in \mathcal{X}$; regression estimator $\widehat r$; grid $G$ of $\alpha$ values in $(0,1)$; number of null training samples $B$; confidence level %\eta$ \annlee{
		$1-\eta$}\\
	%; \add{grid $G$ of $\alpha$ values in $(0,1)$}\\
	%\add{$\{r_{\alpha}(\x)\}_{\alpha \in G}$
	\algorithmicensure \ {\small  estimated %\annlee{7/20 regarding terminology: isn't it just one confidence band with a lower and upper limit? Also changed the order of lower/upper limits} \remove{upper and lower confidence bands $U(\x), L(\x)$ at level $1-\eta$}  \annlee{please check if this is clear} \add{
		$(1-\eta)$ confidence band $\{L(\x), U(\x)\}$ for $\widehat r_{\alpha}(\x)$ under the null,
		%$H_0$ (Eq.~\ref{eq:global_null}), 
		for any $\x \in \mathcal{X}$, and $\alpha \in G$}
	\begin{algorithmic}[1]
		\State \codecomment{Recompute regression under null distribution:}
		%\remove{\State $G \leftarrow$ grid of $\alpha$ values in $(0,1)$}
		\For{$b$ in $1,\ldots,B$}
		\State Draw $U_1^{(b)},\ldots,U_n^{(b)} \sim \textrm{Unif}[0,1]$.
		\For{$\alpha$ in $G$}
		\State Compute indicators $ \{ W^{(b)}_{\alpha,i} = \I( U_i^{(b)} < \alpha) \}_{i=1}^n $
		\State Train regression method $\widehat r_{\alpha}^{(b)}$ on $\{\X_i, W^{(b)}_{\alpha,i}\}_{i=1}^n$
		\EndFor
		\State Compute $\widehat r_{\alpha}^{(b)}(\x)$
		\EndFor
		\State \codecomment{Compute $(1-\eta)$ confidence band for $\widehat r_{\alpha}(\x)$:}
		\State $L(\x), U(\x) \gets \emptyset$
		\For{$\alpha$ in $G$}
		\State $L(\x) \gets L(\x) \cup \frac{\eta}{2}$-quantile of $\{\widehat r_{\alpha}^{(b)}(\x)\}_{b=1}^B$
		\State $U(\x) \gets U(\x) \cup (1-\frac{\eta}{2})$-quantile of $\{\widehat r_{\alpha}^{(b)}(\x)\}_{b=1}^B$
		\EndFor
		\\
		\textbf{return}  
		%\add{$\{r_{\alpha}(\x),U_{\alpha}(\x), L_{\alpha}(\x)\}_{\alpha \in G}$} 
		$L(\x),U(\x)$ on grid $G$
	\end{algorithmic}
\end{algorithm}

\subsection{Handling multivariate responses}
\label{sec:multivariate}

If the response $\Y$ is multivariate, then the random variable $F_{\Y|\X}(\Y|\X)$ is not uniformly distributed 
\citep{genest2001multivariate}, so \PIT\ values cannot be trivially generalized to higher dimensions.
One way to overcome this is to evaluate the \PIT\ statistic of univariate projections of $\Y$, as done by \citet{talts2018validating} for Bayesian consistency checks and \citet{mucesh2020machine} for the prediction setting. That is, the PIT values can be computed using the  estimate $\widehat f(h(\Y)|\x)$ induced by 
$\widehat f(\Y|\x)$ for some chosen $h: \mathbb{R}^p \longrightarrow \mathbb{R}$. Different projections can be used depending on the context. For instance, in Bayesian applications, posterior distributions are often used to compute credible regions for univariate projections of the parameters $\theta$. Thus, it is natural to evaluate PIT values of $h(\theta)=\theta_i$ for each parameter of interest. Another useful projection is copPIT \citep{ziegel2014copula}, which creates a unidimensional projection that has information about the joint distribution of $\Y$. Our diagnostic techniques are not enough to consistently assess the fit to $f(\Y|\x)$ if applied to these projections, but they do consistently evaluate the fit to $f(h(\Y)|\x)$, which is often good enough in practice. 

An alternative approach to assessing $\widehat f$ is through highest predictive density values
(\HPD \ values; \citealt{diana2015hpd,dalmasso2020conditional}), which are defined by
$$ \HPD(\y; \x) = \int_{{\y'}:\widehat{f}({\y'} \gvn \x) \ge \widehat{f}(\y \gvn \x)} \widehat{f}({\y'} \gvn \x) d{\y'}$$
(see Figure \ref{fig:pit_hpd_visualization}, bottom, for an illustration).
$\HPD(\y; \x)$ is a measure of how plausible $\y$ is according to $\widehat f(\y|\x)$ (in the Bayesian context, this is the complement of the e-value \citep{de1999evidence}; small values indicate high plausibility). As with \PIT\ values, \HPD\ values are uniform under the global null hypothesis 
\citep{dalmasso2020conditional}. However, standard goodness-of-fit tests based on  \HPD\ values share the same problem as those based on \PIT: they are 
insensitive to covariate transformations (see Theorem~\ref{thm:hpd_insensitive}, Supp. Mat.~A). Fortunately, 
\HPD \ values are
uniform under the local consistency hypothesis:
\begin{thm}
	\label{thm:hpd_conditional_unif}
	For any $\x \in \mathcal{X}$, if the local null hypothesis $H_0(\x):\widehat{f}(\cdot|\x)=f(\cdot|\x)$ holds, then the distribution of $\HPD(Y;\x)$ given $\x$ is uniform over $(0,1)$. (The reverse is however not true.)
\end{thm}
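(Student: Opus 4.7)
The plan is to fix $\x \in \mathcal{X}$, work under the local null $H_0(\x)$ so that $\widehat f(\cdot|\x) = f(\cdot|\x)$, and call this common density $g$. I want to show that when $Y \sim g$, the random variable $\HPD(Y;\x)$ is uniformly distributed on $(0,1)$. The natural reduction is to rewrite $\HPD$ in terms of the one-dimensional random variable $V := g(Y)$, because by definition
\[
    \HPD(y;\x) \;=\; \int_{\{y': g(y') \ge g(y)\}} g(y')\,dy' \;=\; \pr_{Y'\sim g}\!\bigl(g(Y') \ge g(y)\bigr),
\]
which is a function of $g(y)$ alone. Letting $\bar{F}_V(t) := \pr(V \ge t)$ denote the survival function of $V$, this becomes $\HPD(Y;\x) = \bar{F}_V(V)$.

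The next step is to invoke the standard probability-integral-transform fact applied to the survival function: if $V$ has a continuous distribution, then $\bar{F}_V(V) \sim \textrm{Unif}(0,1)$. Concretely, I would verify for each $\alpha \in (0,1)$ that
\[
    \pr\bigl(\HPD(Y;\x) \le \alpha \,\big|\, \x\bigr) \;=\; \pr\bigl(\bar{F}_V(V) \le \alpha\bigr) \;=\; \alpha,
\]
by using the right-continuity and monotonicity of $\bar{F}_V$ together with the definition of a generalised inverse to translate the event $\{\bar{F}_V(V) \le \alpha\}$ into one of the form $\{V \ge v_\alpha\}$ whose probability is exactly $\alpha$. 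This chain of equalities then delivers the conclusion.

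The main obstacle is the case in which $V = g(Y)$ has atoms, which happens precisely when the density $g$ has level sets of positive Lebesgue measure (plateaus). On such a plateau the survival transform $\bar{F}_V$ has a jump, and $\bar{F}_V(V)$ may not be uniform in the strict sense. I would either (i) assume, as is standard in the HPD literature and in \citet{dalmasso2020conditional}, that $g$ induces a continuous distribution for $V$, or (ii) handle the general case with a randomised tie-breaking rule so that the analogue of $\bar{F}_V(V)$ is uniform unconditionally. Under either convention the argument above closes, and the theorem follows; the rest of the proof is a routine bookkeeping exercise with CDFs.
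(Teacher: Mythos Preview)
Your argument is correct and is in fact more self-contained than what the paper does. The paper's proof simply substitutes $\widehat f=f$ under $H_0(\x)$ and then invokes an external result (Harrison et~al., reproduced in the proof of Theorem~\ref{thm:hpd_insensitive}) which establishes uniformity of HPD values by computing the density of $\xi=\HPD(Y;\x)$ directly via Dirac-delta and Heaviside calculus, showing $h(\xi^*)=1$. Your route is different: you recognise $\HPD(Y;\x)$ as the survival function of $V=g(Y)$ evaluated at $V$ itself, and then appeal to the classical probability-integral-transform fact that $\bar F_V(V)\sim\textrm{Unif}(0,1)$ when $V$ is continuous. This is more elementary and makes the needed regularity assumption (no plateaus in $g$, equivalently no atoms in the law of $V$) explicit, whereas the Dirac-delta computation in the paper tacitly relies on the same condition without flagging it. Either approach is fine; yours is arguably cleaner and closer in spirit to the PIT machinery used elsewhere in the paper.
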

It follows that  the same techniques developed in Sections \ref{sec:our_tests} and \ref{sec:pp} can be used with \HPD \ values  to check global and local consistency for multivariate responses, as well as to construct local P-P plots. (Supp. Mat.~F showcases multivariate extensions via HPD.) The HPD statistic is especially appealing if one wishes to construct predictive regions with $\widehat f$ as
\HPD\ values are intrinsically related to highest predictive density sets 
\citep{hyndman1996computing}. HPD sets are region estimates of $\y$ that contain all $\y$'s for which $\widehat{f}(\y|\x)$ is larger than a certain threshold (in the Bayesian case, these are the highest posterior credible regions). More precisely, if $\HPD_{\alpha}(\x)$ is the $\alpha$-level HPD set for $\y$, then
$$\HPD(\y;\x)<\alpha \iff Y \in \HPD_{\alpha}(\x).$$
Thus, by testing local consistency of $\widehat f$ via \HPD\ values, we assess the coverage of \HPD \ sets. It should be noted, however, that even if the HPD values are uniform (conditional on $\x$), it may be the case that $\widehat f \neq f$.

\begin{figure}[htb]
	\centering
	\includegraphics[width=0.34\textwidth]{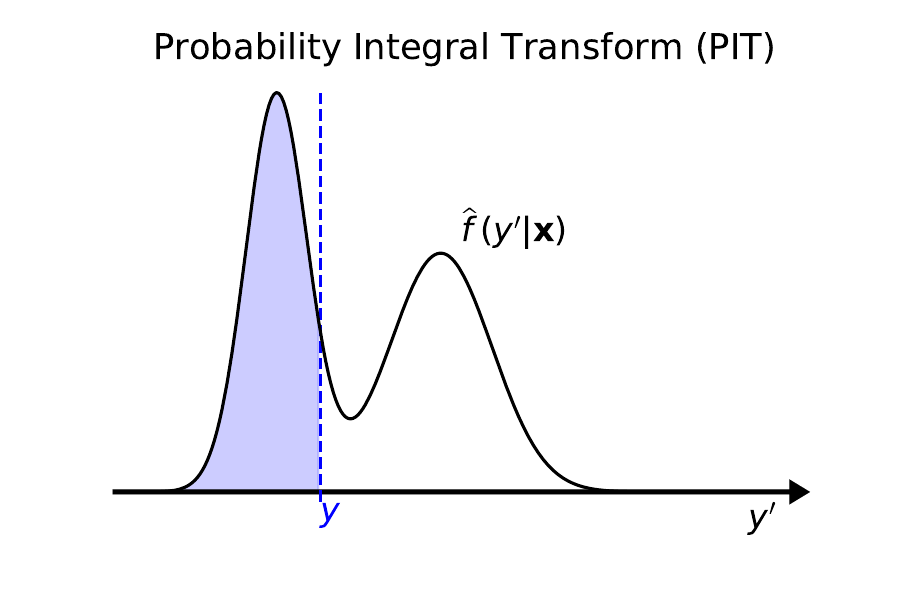}
	\includegraphics[width=0.34\textwidth]{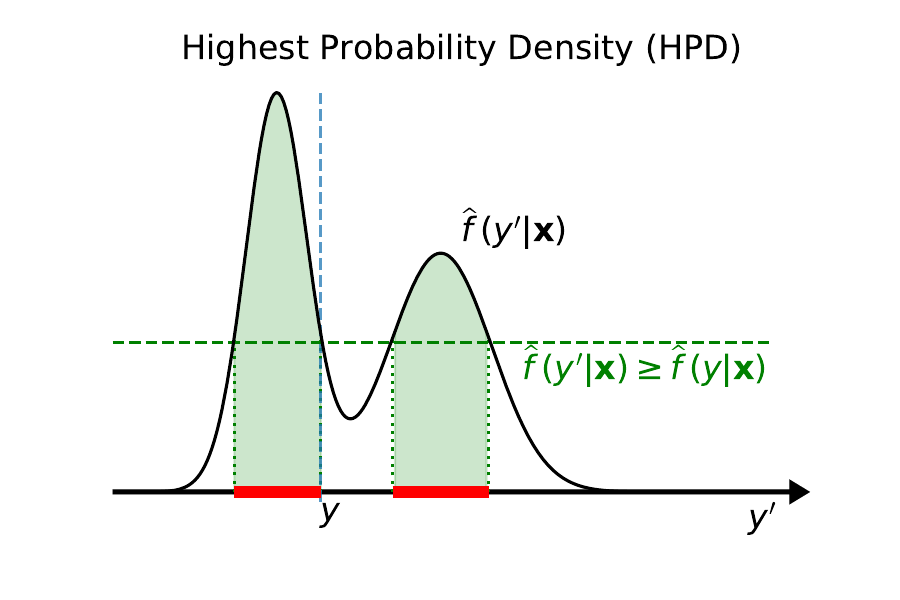}
	\caption{ 
		{\small Schematic diagram of the construction of PIT (top panel, shaded blue area) and HPD value (bottom panel, shaded green area) for an estimated density $\widehat{f}$ evaluated at $(y,\x)$. The highlighted red intervals in the bottom panel correspond to the highest density region (HDR) of $y|\x$.}
	}\label{fig:pit_hpd_visualization}
\end{figure}

\section{Example 1: Omitted Variable Bias in CDE Models}
\label{sec:example1}

Our first example involves omitted  but clearly relevant variables in a prediction setting. Inspired by Section 2.2.2 of \cite{cosma2013textbook}, we generate $\X = (X_1,X_2) \sim N(0, \Sigma) \in \mathbb{R}^2$, with $\Sigma_{1,1} = \Sigma_{2,2} = 1$ and $\Sigma_{1,2} = 0.8$, and take the response to be $Y|\X \sim N(X_1+X_2, 1)$. To mimic the variable selection procedure  common in high-dimensional inference methods, we fit two conditional density models: $\widehat f_1$, trained only on $X_1$, and $\widehat f_2$, trained on $\X$. Both models are fitted using a nearest-neighbor kernel CDE \citep{dalmasso2020conditional} with hyperparameters chosen by data splitting: we use 10000 training, 5000 validation, and 200 test points.

\begin{figure}[!ht]
	\centering
	\includegraphics[width=0.33\textwidth]{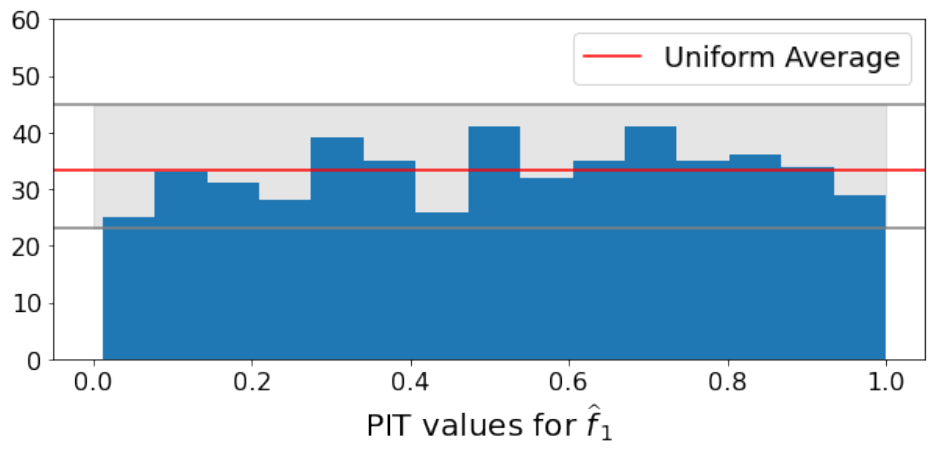}
	\vspace*{-3mm}
	\includegraphics[width=0.33\textwidth]{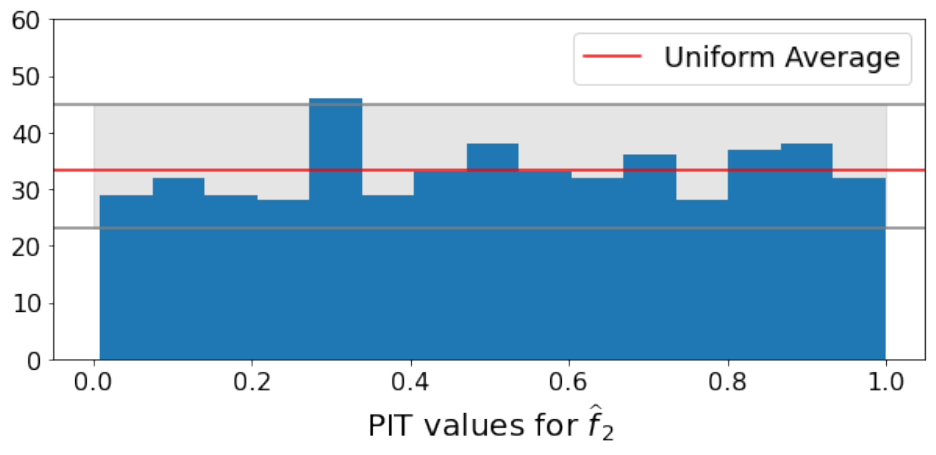}
	\caption{
		{\small Standard diagnostics for Example 1 showing histograms of PIT values computed on 200 test points (with 
			95\% confidence bands for a Unif[0,1] distribution). \textit{Top:} Results for $\widehat f_1$, which has only been fit to the first of two covariates.
			\textit{Bottom:} Results for $\widehat f_2$, which has been fit to both covariates.
			The top panel shows that standard PIT diagnostics  cannot tell that $\widehat f_1$ is a poor approximation to $f$. GCT, on the other hand,
			detects that $\widehat f_1$ is misspecified (p=0.004), while not rejecting the global null for $\widehat f_2$  (p=0.894).}
	}
	\label{fig:PIT_fail}
\end{figure}

\begin{figure*}[!ht]
	\centering
	\textbf{(a) \hspace{4cm} (b) \hspace{3cm} (c) \hspace{3cm} (d) }\par\medskip
	\includegraphics[width=1\textwidth]{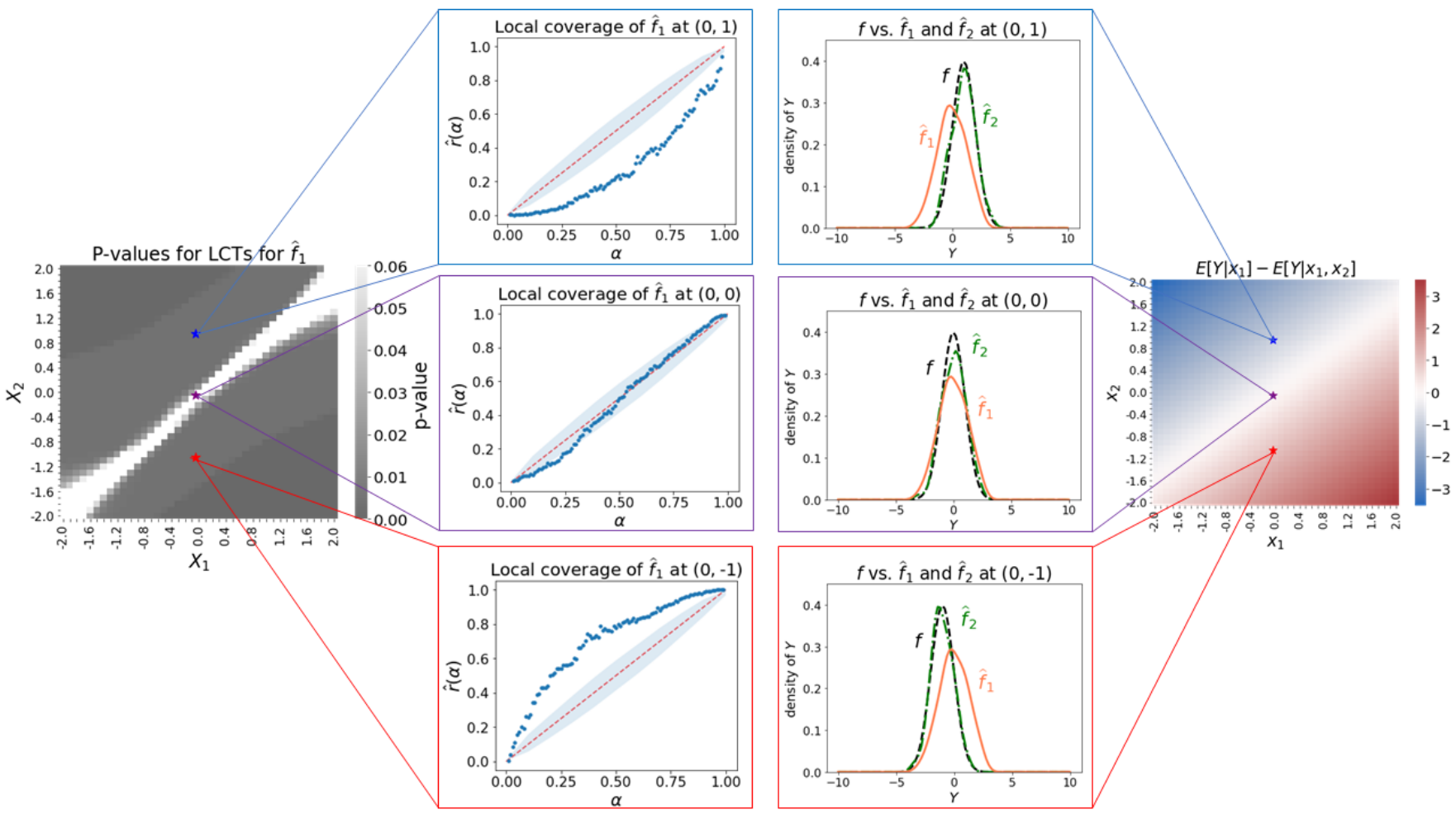}
	\caption{
		{\small New diagnostics for Example 1.
			\textbf{(a)} P-values for LCTs for $\widehat f_1$ indicate a poor fit across most of the feature space.
			\textbf{(b)} Amortized local P-P plots at selected points show the density $\widehat f_1$ as negatively biased (blue), well estimated 
			at significance level $\alpha=0.05$ with barely perceived overdispersion (purple), and positively biased (red). (Gray regions represent 95\% confidence bands under the null.)
			\textbf{(c)} $\widehat f_1$ and $\widehat f_2$ vs. the true (unknown) conditional density $f$ at the selected points. $\widehat f_1$ is clearly
			negatively and positively biased at the blue and red points, respectively, while the model does not reject the local null at the purple point. $\widehat f_2$ fits well at all three points.
			The difference on average in the predictions of $Y$ from $\widehat{f}_1(\cdot|\x)$ vs. the true distribution $f(\cdot|\x)$ for fixed $\x$ indeed corresponds to the ``omitted variable bias'' $\E[Y|x_1] - \E[Y|x_1,x_2]$.
			(\textit{Note:}
			Panels (c) and (d) require knowledge of the true $f$, which would not be available to the practitioner.)}
	}
	\label{fig:toy_example1_badfit}
\end{figure*}

This is a toy example where omitting one of the variables might lead to unwanted bias when predicting the outcome $Y$ for new inputs $\X$. As an indication of this bias, we have included a heat map (see panel (d) of Figure \ref{fig:toy_example1_badfit}) of the difference in the true (unknown) conditional means, $\E[Y|x_1] - \E[Y|x_1,x_2]$ as a function of $x_1$ and $x_2$. (In this example, the omitted variable bias is approximately the same as the difference in the averages of the predictions of $Y$ when using the model $\widehat f_1$ versus the model $\widehat f_2$ at any given $\x \in \mathcal{X}$; see Figure \ref{fig:toy_example1_badfit} panels (c) and (d)). Despite the clear relationship between $Y$ and $X_2$,  both $\widehat f_1$
(which omits $X_2$) and $\widehat f_2$ pass existing goodness-of-fit tests based on PIT (Figure \ref{fig:PIT_fail}). This result can be explained by Theorem \ref{thm:uniform}: because PIT is insensitive to covariate transformations and $\widehat f_1(y|\x) \approx f(y|x_1)$, \PIT \ values are  uniformly distributed,  even though $\widehat f_1$ omits a key variable. The GCT, however, detects that $\widehat f_1$ is misspecified ($p=0.004$), while the global null (Equation \ref{eq:global_null}) is not rejected for $\widehat f_2$ ($p=0.894$).

The next question a practitioner might ask is: ``What exactly is wrong with the fit?''. LCTs and local P-P plots can %provide insight here by 
pinpoint the locations of discrepancies and describe the failure modes.
Panel (a) of Figure \ref{fig:toy_example1_badfit} shows p-values from local coverage tests for $\widehat f_1$ across the entire feature space of $\X$. The patterns in these p-values are largely explained by panel (d), which shows the difference between the conditional means of $Y$ given $x_1$ and given $x_1, x_2$. The detected level of discrepancy between the estimate $\widehat f_1$ and the true conditional density $f$ at a point $\x$ directly relates to the omitted variable bias $\E[Y|x_1] - \E[Y|x_1,x_2] = 0.8x_1 - x_2$:  the LCT p-values close to the line $x_2=0.8x_1$ are large (indicating no statistically significant deviations from the true model), and p-values decrease as we move away from this line.

Panel (b) of Figure \ref{fig:toy_example1_badfit} zooms in on a few different locations $\x$ with local P-P plots that depict and interpret distributional deviations. %show how $\widehat f_1$ could potentially be improved. 
At the blue point, $\widehat f_1$ underestimates the true $Y$: we reject the local null (Equation \ref{eq:local_null}), %at level $\alpha=0.05$,
and the P-P plot indicates negative bias. Conversely, at the red point, $\widehat f_1$ overestimates the true $Y$; we reject the local null, and the P-P plot indicates positive bias. At the purple point, $\widehat f_1$ is close to $f$, so the local null hypothesis is not rejected. 
%Finally, the right panel of Figure \ref{fig:toy_example1_goodfit} (in Supp. Mat. B) shows p-values from LCTs for the model $\widehat f_2$ which was fit to both variables. Here, $\widehat f_2$ passes all tests, and local P-P plots at various points indicate a good fit.

This toy example is a simple illustration of the general  phenomenon of potentially unwanted omitted variable bias, which can be difficult to detect without testing for local and global consistency of models. Our proposed diagnostics identify this issue and provide insight into how the omitted variable distorts the fitted model relative to the true conditional density, across the entire feature space.

\begin{figure*}[!ht]
	\centering
	\includegraphics[width=0.92\textwidth]{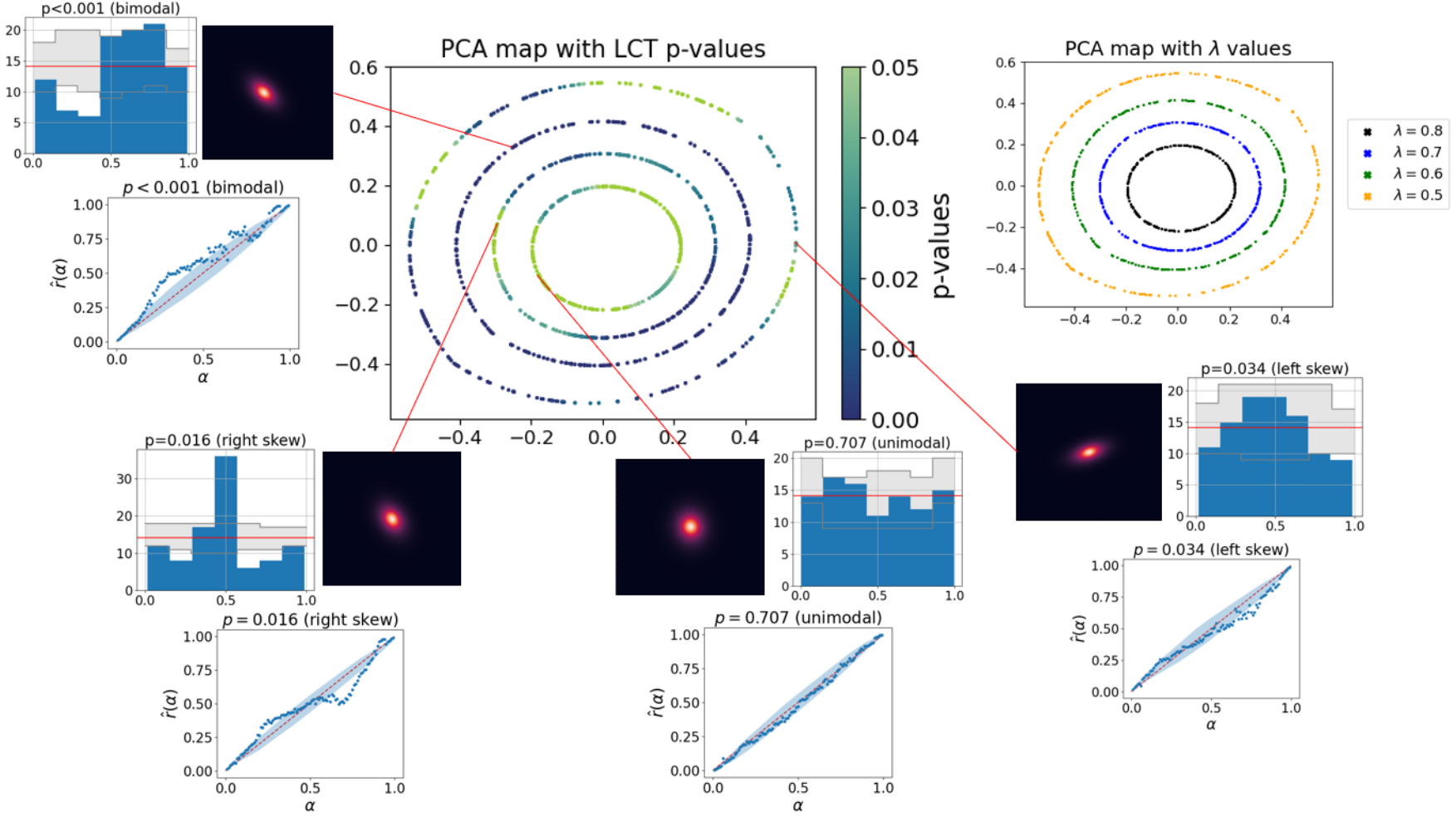}
	\caption{
		{\small New diagnostics for Example 2. For visualization, we show the location of the test galaxy points in $\mathbb{R}^{400}$ along the first two principal components (see center panel ``PCA map with LCT p-values''). Test statistics from the LCTs indicate that the unimodal density model generally fits well  for the $\lambda=0.8$ population, while fitting poorly for the other three populations with skewed and bimodal true redshift distributions. Local P-P plots or ALPs show statistically significant deviations in the CDEs (gray regions are 95\% confidence bands under the null) for the latter population, suggesting  the need for more flexible model classes. We also display local PIT histograms with confidence bands under the null, as a different way to present the same information as in the ALPs. (The histograms are computed from the $\widehat r_{\alpha}$ values according to Algorithm 4; no additional regression %or binning
			is needed.)
		}
	}
	\label{fig:photoz_example}
\end{figure*}

%\vspace{-0.2cm}
\section{Example 2: Conditional Neural Densities for Galaxy Images}
\label{sec:example2}
%\annlee{6/8 done for now; please review}\rafael{6/8 done reviewing}
%\ann{6/8 The problem is not estimating redshifts; the problem is checking such estimates. The current paragraph makes it sound like we are trying to estimate redshifts better.} %\remove{In this example, we apply neural density models to estimate “redshifts” $Z$ (a proxy for distance) of galaxy images $\X$ (“photometric” data). This example is motivated by the urgent need for good “photo-z” density estimates in cosmology, as point estimates of Z do not capture multimodalities \citep{ball2009photoz}. Diagnostics currently used by the photo-z community have known shortcomings \citep{schmidt2020photoz}, and our method is the first to properly \remove{fix} \david{6/7 address? rectify?} them.}

In this example of CDE in a prediction setting, we apply neural density models to estimate the distribution of synthetic “redshift” $Z$ (a proxy for distance; the response) assigned to photometric or ``photo-z'' galaxy images $\X$ (the predictors). We then illustrate how our methods %can
distinguish between ``good'' and ``bad'' CDEs. This toy example is motivated by the urgent need for metrics to assess photo-z probability density function accuracy. Diagnostics currently used by astronomers have known shortcomings \citep{schmidt2020photoz}, and our method is the first to properly address them.

Here, $\x$ represents a $20 \times 20$-pixel image of an elliptical galaxy generated by \texttt{GalSim}, an open-source toolkit for simulating realistic images of astronomical objects \citep{rowe2015galsim}. In \texttt{GalSim}, we can vary the axis ratio $\lambda$, defined as the ratio between the minor and major axes of the projection of the elliptical galaxy. We create four equally sized populations of galaxies, with $\lambda \in \{0.8, 0.7, 0.6, 0.5\}$. We then assign a response variable $Z$ according to different distributions (unimodal, skewed and bimodal) as follows:
\begin{align*}
Z | \lambda&=0.8 \sim N(0.1, 0.02) \\
Z | \lambda&=0.7 \sim \text{Beta}(3,7) \\
Z | \lambda&=0.6 \sim 0.6N(0.3, 0.05) + 0.4N(0.7, 0.05) \\
Z | \lambda&=0.5 \sim \text{Beta}(7,3).
\end{align*}
See Figure 8 in Supp. Mat. D for a plot of these distributions.

For illustration, %ve purposes,
we fit a unimodal Gaussian neural density model to  estimate the conditional density $Z|\X$. Our diagnostics pinpoint where %the locations
in the feature space %where
the density is bimodal or skewed, and thus a fit with one Gaussian is inadequate. We know of no other diagnostics that can provide such insight when fitting neural density models.
Specifically, we fit a convolutional mixture density network (ConvMDN, \cite{disanto2018cmdn}) with a single Gaussian component, two convolutional and two fully connected layers with ReLU activations \citep{glorot2011relu}. (We train on 10000 images using the Adam optimizer \citep{kingma2014adam} with learning rate $10^{-3}$, $\beta_1 = 0.9$, and $\beta_2 = 0.999$.) This gives an estimate of $f(z|\x)$.
%\nic{$f(Z|\x)$ for consistency?}\rafael{I think lower case $z$ makes more sense here}. 
We expect this CDE model to fit well for the $\lambda=0.8$ unimodal population, and fit poorly for the other bimodal or skewed  populations.

Our diagnostic framework effectively detects the flaws of this CDE model. First, we perform the GCT  which rejects  the global null ($p<0.001$). Next, we turn to LCTs and P-P plots to explore where and how the fit is inadequate. Figure \ref{fig:photoz_example} %\annlee{we are not plotting the images using PCs; that means image reconstruction} \remove{plots the test galaxy images using the first two principal components} 
shows a principal component map of the test data.  The LCTs are able to identify a unimodal Gaussian model fits well for the $\lambda=0.8$ population, but that the same  model fails to  adequately estimate the  PDFs of the remaining populations.  P-P plots at selected test points indicate significant distributional deviations and suggest the need to consider more flexible model classes that incorporate bimodal and skewed distributions. %\david{6/30 Possibly add: We can also visualize the same information in local histograms, computed on the $\widehat r_{\alpha}$ values in each P-P plot.}\ann{I'm adding this to the end of Section 3.1 instead}

\section{Example 3: Neural Posterior Inference for Galaxy Images}
\label{sec:example3}

\begin{figure*}[!ht]
	\centering
	\includegraphics[width=0.92\textwidth]{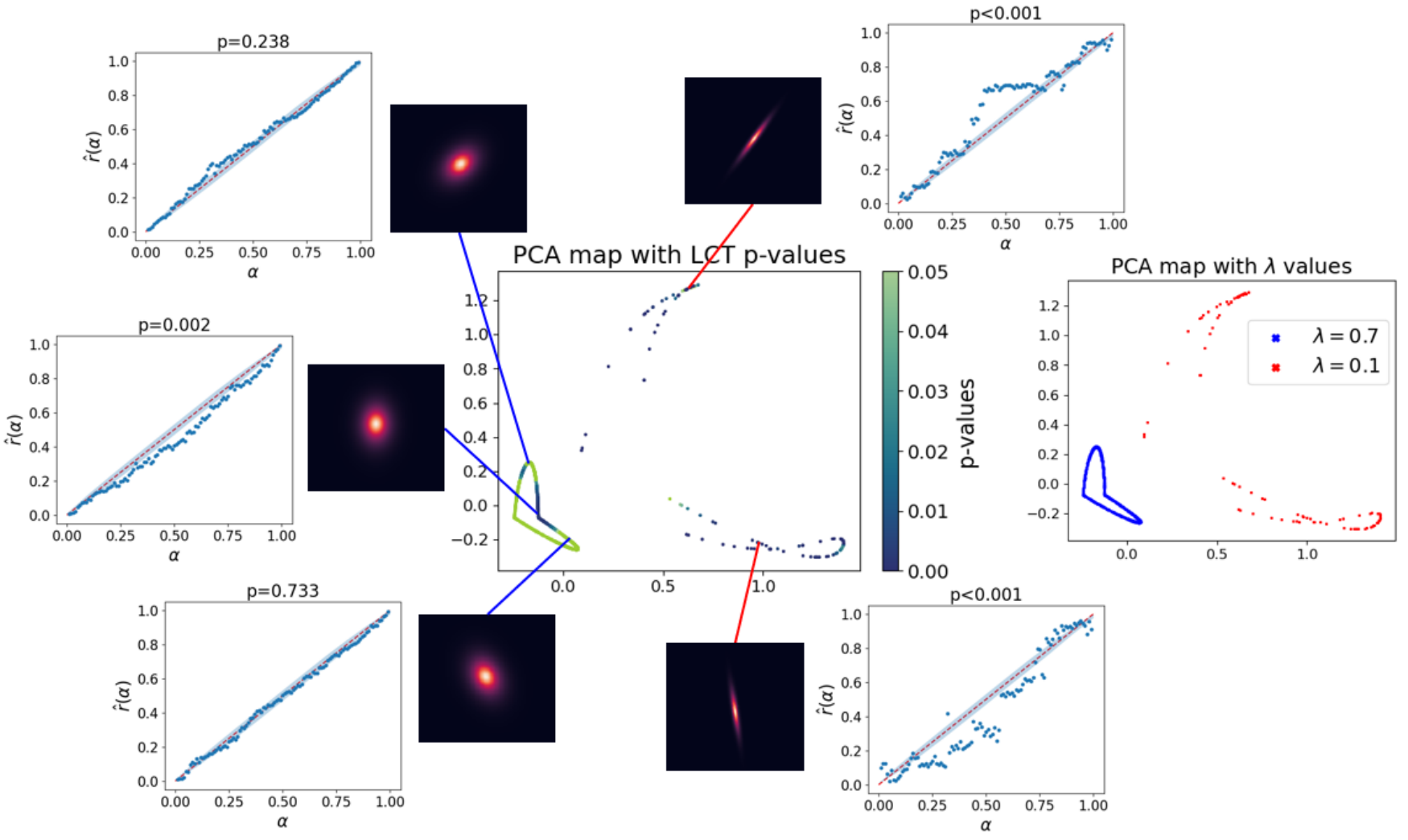}
	\caption{
		{\small New diagnostics for simulation-based inference algorithm in Example 3. For visualization, we show the location of the test galaxy points in $\mathbb{R}^{400}$ along its first two components  (see center panel ``PCA map with LCT p-values'').
			P-values for LCTs indicate that the ConvMDN  generally fits well for the dominant 90\% population of spheroidal galaxies
			($\lambda=0.7$), while fitting poorly for the smaller
			10\% subpopulation of elongated galaxies ($\lambda=0.1$).
			Local P-P plots
			show statistically significant deviations in the CDEs (gray regions are 95\% confidence bands under the null) for the latter population, suggesting we need better approximations of the posterior for this group.}
	}
	\label{fig:galsim_example}
\end{figure*}

Our final example tests for image data $\x \in \mathbb{R}^{400}$ whether a Bayesian posterior model $\widehat f(\theta|\x)$ fits the true posterior. As in Example 2, $\x$ represents an image of an elliptical galaxy generated by \texttt{GalSim}. As before, $\lambda$ is the galaxy's axis ratio, but now the quantity of interest $\theta$ is the galaxy's rotation angle with respect to the x-axis; that is, an unknown {\em internal} parameter. For illustration, we create a mixture of a larger population with $\lambda=0.7$ (spheroidal galaxies), and a smaller population with $\lambda=0.1$ (elongated galaxies). We then simulate a sample of images as follows: first, we draw $\lambda$ and $\theta$ from a prior distribution given by
\begin{align*}
\P(\lambda=0.7)&=1-\P(\lambda=0.1)=0.9 \\
&\theta\sim Unif(-\pi,\pi)
\end{align*}
Then we sample $20\times 20$ galaxy images $\X$ according to the data model $\X|\lambda,\theta \sim \texttt{GalSim}(a,\lambda)$, where 
\begin{align*}
a|\lambda&=0.7 \sim N(\theta,0.05) \\
a|\lambda&=0.1 \sim 0.5Laplace(\theta,0.05) + 0.5Laplace(\theta,0.0005).
\end{align*}

As in Example 2, we fit a convolutional mixture density network (ConvMDN); in this case, it gives us an estimate of the posterior distribution $f(\theta|\x)$. This time, we allow $K$, the number of mixture components, to vary. According to the KL divergence loss computed on a separate test sample with 1000 images, the best fit of $f(\theta|\x)$ is achieved by a ConvMDN model with $K=7$ (see Table 1 in Supp.~Mat.~E).
%Loss functions only measure overall model discrepancy with test data. %Our diagnostic framework, on the other hand, gives a fuller account of a candidate model's goodness-of-fit, as well as guidance for potential improvements.
Here, the ConvMDN model with the smallest KL loss fails the GCT ($p<0.001$), so we turn to LCTs and P-P plots to understand why. Figure \ref{fig:galsim_example} plots the test galaxy images %with multidimensional scaling using 
along their first two principal components. The LCTs show that the ConvMDN model
generally fits the density well for the main population of spheroidal galaxies ($\lambda=0.7$), but fails to properly model the smaller population of elongated galaxies ($\lambda=0.1$). P-P plots at selected test points indicate severe bias in the posterior estimates for the $\lambda=0.1$ population. These plots suggest that an effective way of obtaining a better approximation of the posterior is by improving the fit for the $\lambda=0.1$ population (by obtaining more data in that region of the feature space, using a different model class, etc). For instance, CDE models not based on mixtures \citep{papamakarios2019maf} could be more effective. 

{\bf Conclusion.} Conditional density models are widely used for uncertainty quantification in prediction and Bayesian inference. In this work, we offer practical procedures (GCT, LCT, ALP) for identifying, locating, and interpreting modes of failure for an approximation of the true conditional density. Our tools can be used in conjunction with loss functions, which are useful for performing model selection, but not good at evaluating whether a practitioner should keep looking for better models, or at providing information as to how a model could be improved. Finally, because LCT pinpoints hard-to-train regions of the feature space, our framework can provide guidance for active learning schemes.

{\bf Acknowledgments.} This work is supported by NSF DMS-2053804 and NSF PHY-2020295. RI is grateful for the financial support of CNPq (309607/2020-5) and FAPESP (2019/11321-9). 

\newpage

\bibliography{zhao_683}

\section*{Supplementary Materials}

\subsection*{A: Proofs}
\label{app:proofs}

In this section, we show proofs of the results stated in the paper.

\begin{proof}[Proof of Theorem \ref{thm:uniform}] 
	Let $z=g(\x)$ and $Z=g(\X)$. 
	Notice Equation \ref{eq:reduction} implies
	$\widehat F(Y|\x)= F(Y|g(\x))=F(Y|z)$, and thus
	\begin{align}
	\label{eq:equality_red}
	\widehat F(Y|\X)= F(Y|g(\X))=F(Y|Z)    
	\end{align}
	Thus, if $(\X,Y) \sim F_{\X,Y}$ then, for every $0 \leq a \leq 1$,
	\begin{align*}
	&\P(\PIT(Y,\X)\leq a) = \P(\widehat F(Y|\X)\leq a)\\
	&=\int_\mathcal{Z} \P(\widehat F(Y|\X)\leq a|Z=z)f(z)dz \\
	&=\int_\mathcal{Z} \P( F(Y|Z)\leq a|Z=z)f(z)dz \mbox{\ \ \ \ \ \ (Eq. \ref{eq:equality_red})}\\
	&=\int_\mathcal{Z} \P( F(Y|z)\leq a|Z=z)f(z)dz\\\
	&=\int_\mathcal{Z} \P( Y\leq F^{-1}(a|z) |Z=z)f(z)dz\\
	&=\int_\mathcal{Z}  F(F^{-1}(a|z)|Z=z)f(z)dz=\int_\mathcal{Z}  af(z)dz=a.
	\end{align*}
\end{proof}

\begin{proof}[Proof of Theorem \ref{thm:local_uniform}]
	Assume that $\widehat{f}(y|\x)=f(y|\x)$. It follows that,
	for any $0<\alpha<1$,
	\begin{align*}
	\P\left( \PIT(Y;\X) < \alpha|\x\right)&=\P \left(  F_{Y|\x}(Y) \leq \alpha |\x\right)\\
	&=\P \left(   Y \leq F ^{-1}_{Y|\x}(\alpha)|\x\right)\\
	&=F_{Y|\x}\left( F ^{-1}_{Y|\x}(\alpha)\right)\\
	&=\alpha,
	\end{align*}
	which shows that the distribution of $\PIT(Y;\X)$,  conditional on $\x$, is uniform.
	Now, assume that
	$\P\left(\PIT(Y;\X) < \alpha|\x\right)=\alpha$ for every $0<\alpha<1$ and let $\widehat F_{y|\x}(y)=\int_{-\infty}^{y}\widehat f(y'|\x)dy'$. Then
	\begin{align*}
	\alpha
	&= \P\left( \PIT(Y;\X) < \alpha|\x\right)\\
	&=\P \left( \widehat F_{Y|\x}(Y) \leq \alpha|\x\right)\\
	&=\P \left(   Y \leq \widehat F ^{-1}_{Y|\x}(\alpha)|\x\right)\\
	&=F_{Y|\x}\left( \widehat F ^{-1}_{Y|\x}(\alpha)\right).
	\end{align*}
	It follows that
	$F_{Y|\x}\left( \widehat F ^{-1}_{Y|\x}(\alpha)\right)=\alpha$, and thus
	$$\widehat F ^{-1}_{Y|\x}(\alpha)=F^{-1}_{Y|\x}\left(\alpha\right) \ \forall \alpha \in (0,1).$$
	The conclusion follows from the fact that the CDF characterizes the distribution of a random variable.
\end{proof}

\begin{proof}[Proof of Corollary \ref{cor:diagnostics}]
	Notice that
	$r_{\alpha}(\x)=\E\left[Z^\alpha|\x\right]= \P\left( \PIT(Y;\X) < \alpha|\x\right)$. It follows that $r_{\alpha}(\x)=\alpha$ 
	for every $\alpha \in (0,1)$ if, and only if, the distribution of $\PIT(\Y;\X)$, conditional on $\X$, is uniform over $(0,1)$. The conclusion follows from Theorem \ref{thm:local_uniform}.
\end{proof}

\setcounter{thm}{3}
\begin{thm}[\textbf{HPD values are insensitive to covariate transformations}]
	\label{thm:hpd_insensitive}
	Let $(\X, \Y) \sim F_{\X, \Y}$. If there exists a function $g:\mathcal{X} \to \mathcal{Z}$ such that $\widehat{f}(\y|\x) = f(\y|g(\x))$, then $\HPD(\Y;\X) \sim Unif(0,1)$.
\end{thm}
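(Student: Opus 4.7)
The plan is to reduce the statement to the (already established) fact that HPD values of the true density are uniformly distributed under the true distribution, in exact analogy with the PIT argument used for Theorem~\ref{thm:uniform}. Set $Z = g(\X)$, and for each $z \in \mathcal{Z}$ define the HPD functional of the true conditional density,
\begin{equation*}
\HPD^{*}(\y; z) := \int_{\{\y' : f(\y'|z) \geq f(\y|z)\}} f(\y'|z)\, d\y'.
\end{equation*}

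First I would observe that substituting the hypothesis $\widehat f(\y|\x) = f(\y|g(\x))$ into the definition of HPD from the excerpt immediately yields
\begin{equation*}
\HPD(\y;\x) = \HPD^{*}(\y;\, g(\x))
\end{equation*}
for every $\x$ and $\y$, since both the integrand and the super-level set $\{\y' : \widehat f(\y'|\x) \geq \widehat f(\y|\x)\}$ depend on $\x$ only through $g(\x)$. Consequently $\HPD(\Y;\X) = \HPD^{*}(\Y;Z)$ is a measurable function of $(\Y, Z)$ alone. By construction of $Z$, the conditional law of $\Y$ given $Z=z$ is $f(\cdot|z)$.

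Next I would invoke the standard fact that, for any continuous density $h$, the HPD statistic built from $h$ is $\textrm{Unif}(0,1)$ when evaluated at a draw from $h$ itself — this is the fact cited as part of \citet{dalmasso2020conditional}, and follows quickly by writing $\HPD^{*}(\Y;z) = 1 - F_{W}(W)$ with $W := f(\Y|z)$, which has a continuous distribution. Applying this pointwise with $h = f(\cdot|z)$ gives $\HPD^{*}(\Y;Z) \mid Z = z \sim \textrm{Unif}(0,1)$ for a.e.\ $z$, and integrating over the marginal law of $Z$ yields the claim $\HPD(\Y;\X) \sim \textrm{Unif}(0,1)$.

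The only delicate ingredient is the HPD-uniformity lemma itself, which requires $f(\Y|z)$ to have no atoms when $\Y \sim f(\cdot|z)$; since this continuity condition is already implicit in the continuous-density setting used throughout the paper, I do not expect it to pose a real obstacle, and the rest of the argument is a direct conditioning computation parallel to the proof of Theorem~\ref{thm:uniform}.
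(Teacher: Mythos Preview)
Your proposal is correct and follows essentially the same approach as the paper: reduce $\HPD(\y;\x)$ to the HPD functional of the true density $f(\cdot|z)$ via the substitution $z=g(\x)$, and then appeal to the uniformity of HPD values under the true conditional distribution. The only cosmetic difference is that the paper re-derives that uniformity lemma in full (reproducing the delta/Heaviside computation of \citet{diana2015hpd}), whereas you invoke it as a known fact with the brief $1-F_W(W)$ sketch.
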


\begin{proof}[Proof of Theorem~\ref{thm:hpd_insensitive}]
	Under the assumption we can rewrite the HPD value as:
	
	\begin{align*}
	\HPD(\y,\x) &= \int_{\y': f(\y'|g(\x)) > f(\y|g(\x))} f(\y'|g(\x)) dy' \\
	&= \int_{y': f(\y'|\z) > f(\y|\z)} f(\y'|\z) dy' = \HPD(\y,\z),
	\end{align*}
	
	with $g(\x) = \z$. Following the proof structure by \cite{diana2015hpd} closely, we define the random variable $\xi_{\z,\y}=\HPD(\z, \y)$, equipped with the probability density function $h: (\mathcal{Z} \times \mathcal{Y}) \to \mathbb{R}$. Dropping the subscripts for simplicity, let $\xi^*=\HPD(\z^*, \y^*)$ the HPD value of a specific pair $(\z^*, \y^*)$; $\xi^*$ is the probability mass of $f$ above the level set $f(\y^*|\z^* = g(\x^*))$. Without loss of generality, if we show that $h(\xi^*) = 1$ we can conclude that $\xi(y,z)$ is uniformly distributed $U[0,1]$. Using the fundamental theorem of calculus we can write:
	
	\setcounter{algorithm}{2}
	\begin{algorithm}[!ht]
		\caption{
			P-values for Global Coverage Test
		}\label{alg:p_values_gct}
		\algorithmicrequire \ {\small  conditional density model $\widehat f$; test data $\{\X_i,Y_i\}_{i=1}^n$; regression estimator $\widehat r$; number of null training samples $B$}\\
		\algorithmicensure \ {\small estimated p-value $\widehat{p}(\x)$ across all $\x \in \mathcal{X}$}
		\begin{algorithmic}[1]
			\State \codecomment{Compute test statistic over $\X_1,\ldots,\X_n$:}
			\State Compute values $\PIT(Y_1;\X_1),\ldots,\PIT(Y_n;\X_n)$
			\State $G \leftarrow$ grid of $\alpha$ values in $(0,1)$.
			\For{$\alpha$ in $G$}
			\State Compute indicators $Z^{\alpha}_1,\ldots,Z^{\alpha}_n$
			\State Train regression method $\widehat r_{\alpha}$ on $\{\X_i, Z^{\alpha}_i\}_{i=1}^n$
			\EndFor
			\State Compute test statistic $S = \frac{1}{n} \sum_{i=1}^n T(\X_i)$
			\State \codecomment{Recompute test statistic under null distribution:}
			\For{$b$ in $1,\ldots,B$}
			\State Draw $U_1^{(b)},\ldots,U_n^{(b)} \sim \textrm{Unif}[0,1]$.
			\For{$\alpha$ in $G$}
			\State Compute indicators $ \{ Z^{(b)}_{\alpha,i} = \I( U_i^{(b)} < \alpha)\}_{i=1}^n $
			\State Train regression method $\widehat r_{\alpha}^{(b)}$ on $\{\X_i, Z^{(b)}_{\alpha,i}\}_{i=1}^n$
			\EndFor
			\State Compute $\displaystyle T^{(b)}(\X_i):=\frac{1}{|G|}\sum_{\alpha \in G} (\widehat r_{\alpha}^{(b)}(\X_i)-\alpha)^2$ for $i=1,\ldots,n$
			\State Compute $S^{(b)} := \frac{1}{n} \sum_{i=1}^n T^{(b)}(\X_i)$
			\EndFor
			\\
			\textbf{return}  $\displaystyle \widehat{p}(\x):=\frac{1}{B}\sum_{b=1}^B\I\left(S < S^{(b)}\right)$
		\end{algorithmic}
	\end{algorithm}
	
	\begin{align*}
	h(\xi^*) &= \frac{\partial}{\partial \xi^*} \int_{-\infty}^{\xi^*} g(\epsilon) d\epsilon \\ 
	&= \frac{\partial}{\partial \xi^*} \int_{-\infty}^{\xi^*} \int_{\mathcal{Z} \times \mathcal{Y}} \delta(\xi(y, z) -  \epsilon) dF(z,y) d\epsilon \\
	&= \frac{\partial}{\partial \xi^*} \int_{\mathcal{Z} \times \mathcal{Y}} \Phi(\xi(y, z) - \xi^*) dF(z,y) \\
	&= \frac{\partial}{\partial \xi^*} \int_{\mathcal{Z}} \left[ \int_{\mathcal{Y}} \Phi(\xi(y, z) - \xi^*) f(y|z) dy \right] f(z) dz \\
	&= \frac{\partial}{\partial \xi^*} \int_{\mathcal{Z}} \xi^* f(z) dz = \frac{\partial}{\partial \xi^*} \xi^{*} = 1
	\end{align*}
	
	where $\Phi$ is the Heavyside function, which is $1$ when the argument is positive and $0$ otherwise.
\end{proof}

\begin{proof}[Proof of Theorem~\ref{thm:hpd_conditional_unif}]
	Under the null hypothesis $H_0(\x)$ for any $\x \in \mathcal{X}$ we have that:
	
	\begin{align}
	\HPD(\y;\x) &= \int_{{\y'}:\widehat{f}({\y'} \gvn \x) \ge \widehat{f}(\y \gvn \x)} \widehat{f}({\y'} \gvn \x) d{\y} \\
	&= \int_{{\y'}:f({\y'} \gvn \x) \ge f(\y \gvn \x)} f({\y'} \gvn \x) d{\y}.
	\end{align}
	
	Applying the results about uniformity of HPD for $f(\cdot|\x)$ from \citet[Section A.2]{diana2015hpd} (also reproduced in the proof of Theorem~\ref{thm:hpd_insensitive}) proves the theorem.
	
\end{proof}

\begin{table*}[ht!]
	\centering
	\resizebox{0.65\textwidth}{!}{%
		\begin{tabular}{|c|c|c|c|c|c|c|c|c|c|}
			\hline
			\textbf{K} & \textbf{2} & \textbf{3} & \textbf{4} & \textbf{5} & \textbf{6} & \textbf{7} & \textbf{8} & \textbf{9} & \textbf{10} \\
			\hline
			\textbf{KL loss} & -0.729 & -0.885 & -0.915 & -0.906 & -0.897 & -0.917 & -0.906 & -0.911 & -0.905 \\
			\hline
	\end{tabular}}
	\caption{{\small
			The KL divergence loss indicates that the number of mixture components in the ConvMDN approximation of the posterior in Example 2 should be $K=7$. 
	}}
	\label{tab:KL_loss}
\end{table*}

\subsection*{B: GLOBAL COVERAGE TEST}
\label{app:gct}

Algorithm \ref{alg:p_values_gct} describes our procedure for testing global consistency (see Definition 1 in the paper) using a Monte Carlo sampling strategy.

\subsection*{C: LOCAL PIT HISTOGRAMS}
\label{app:histograms}

\begin{algorithm}[!ht]
	\caption{
		Local PIT histograms 
		%\david{7/21 this is the old algorithm, modified slightly}\annlee{7/21 editing}
	}\label{alg:histograms}
	\algorithmicrequire \ {\small conditional density model $\widehat f$; test data $\{\X_i,Y_i\}_{i=1}^n$; test point $\x \in \mathcal{X}$; regression estimator $\widehat r$; grid $G$ of $\alpha$ values in $(0,1)$; %confidence level $1-\eta$\rafael{this is currently not used, right?};
		number of bins $n_{\text{bin}}$}\\
	\algorithmicensure \ {\small local PIT histogram $H(\x)$ for any $\x \in \mathcal{X}$}
	\begin{algorithmic}[1]
		\State \codecomment{Compute estimated local coverage at $\x$:}
		\State Compute values $\PIT(Y_1;\X_1),\ldots,\PIT(Y_n;\X_n)$
		%\State $G \leftarrow$ grid of $\alpha$ values in $(0,1)$.
		\For{$\alpha$ in $G$}
		\State Compute indicators $W^{\alpha}_1,\ldots,W^{\alpha}_n$
		\State Train regression method $\widehat r_{\alpha}$ on $\{\X_i, W^{\alpha}_i\}_{i=1}^n$
		\EndFor
		\State Compute values $\left\{ \widehat r_{\alpha}(\x) \right\}_{\alpha \in G}$
		\State \codecomment{Compute local PIT histogram:}
		\State Create histogram $H(\x)$ of $\left\{ \widehat r_{\alpha}(\x) \right\}_{\alpha \in G}$ values by dividing $[0,1]$ into  $n_{\text{bin}}$ equal-sized bins
		\\
		\textbf{return} histogram $H(\x)$
	\end{algorithmic}
\end{algorithm}

\iffalse
\begin{algorithm}[!ht]
	\caption{
		Local PIT histograms \annlee{7/21 version 3}
	}\label{alg:histograms}
	\algorithmicrequire \ {\small conditional density model $\widehat f$; test data $\{\X_i,Y_i\}_{i=1}^n$; test point $\x \in \mathcal{X}$; regression estimator $\widehat r$; %\add{
		grid $G$ of $\alpha$ values in $(0,1)$%} \remove{ confidence level  $1-\eta$}
	}\\
	\algorithmicensure \ {\small %\annlee{check if this is clear} \add{
		local PIT  histogram $H(\x)$, for any  $\x \in \mathcal{X}$, with bins defined by the grid $G$} %\remove{of $\widehat r_{\alpha}(\x)$ with  $(1-\eta)$ confidence band under the null}
	%\remove{, for any $\x \in \mathcal{X}$}}
	\begin{algorithmic}[1]
		\State \codecomment{Compute estimated local coverage at $\x$:}
		\State Compute values $\PIT(Y_1;\X_1),\ldots,\PIT(Y_n;\X_n)$
		%\State $G \leftarrow$ grid of $\alpha$ values in $(0,1)$.
		\For{$\alpha$ in $G$}
		\State Compute indicators $W^{\alpha}_1,\ldots,W^{\alpha}_n$
		\State Train regression method $\widehat r_{\alpha}$ on $\{\X_i, W^{\alpha}_i\}_{i=1}^n$
		\EndFor
		\State Compute values $\left\{ \widehat r_{\alpha}(\x) \right\}_{\alpha \in G}$
		\State \codecomment{Compute local PIT histogram $H(\x)$:} %\remove{and $(1-\eta)$ confidence band $[U(\x), L(\x)]$ under the null}:}
		%\State \add{Number of bins, $N \leftarrow |G|-1$}
		\For{$i$ in $\{1, \ldots, |G|-1\}$}
		\State  $H_{G(i)}(\x)=\widehat{r}_{G(i+1)}(\x)-\widehat{r}_{G(i)}(\x)$ 
		%\State   \remove{$U_{G(i)}(\x) \leftarrow \texttt{binom.ppf}(|G|, G(i+1)-G(i), 1-\eta/2)$} 
		%\State  \remove{$L_{G(i)}(\x) \leftarrow \texttt{binom.ppf}(|G|, G(i+1)-G(i), \eta/2)$}
		\EndFor 
		\State \textbf{return}  histogram %\add{
		heights $\{H_{G(i)}(\x)\}_{i=1}^{|G|-1}$%\remove{where the height with binning $G$, height $H(\x)$, and lower/upper confidence bands $L(\x)$ and $U(\x)$}
	\end{algorithmic}
\end{algorithm}
\fi

%\david{6/30 new section, please look over}
Algorithm \ref{alg:histograms} describes our procedure for constructing local PIT histograms. Note that if one has already obtained estimators $\widehat r_{\alpha}$ of %\remove{local coverage} \add{
the local PIT distribution (cdf) via regression (by, for example, running Algorithm \ref{alg:p_values}), then one can generate a local histogram %\add{(pdf)}
at any $\x \in \mathcal{X}$ by simply %\remove{evaluating}  \add{differentiating}
using those $\widehat r_{\alpha}$ functions at $\x$, without needing to rerun any regressions. %or rebin the regression output. \add{
Similarly, there is no need to repeat the MC sampling under the null in Algorithm 2 to create confidence bands for the local PIT histograms.%} \remove{In Algorithm \ref{alg:histograms_conf}, we define $\texttt{binom.pff}(n, p, \tau)$ to be the $\tau$-percentile of a $\text{Binomial}(n, p)$ distribution.}

\begin{algorithm}[!ht]
	\caption{Confidence band for local PIT histogram  under $H_0$ 
		%\david{7/21 this now uses MC confidence bands}\annlee{7/21 edited}
	}\label{alg:null_sample_PIT}
	\algorithmicrequire \ {\small test data $\{\X_i\}_{i=1}^n$; test point $\x \in \mathcal{X}$; regression estimator $\widehat r$; number of bins $n_{\text{bin}}$; grid $G$ of $\alpha$ values in $(0,1)$; number of null training samples $B$; confidence level 1-$\eta$; number of bins $n_{\text{bin}}$}\\
	%\add{$\{r_{\alpha}(\x)\}_{\alpha \in G}$
	\algorithmicensure \ {\small estimated $(1-\eta)$ confidence band $\{L(\x), U(\x)\}$ for local PIT histogram $H(\x)$  under the null, for any $\x \in \mathcal{X}$}
	\begin{algorithmic}[1]
		\State \codecomment{Recompute regression under null distribution:}
		%\remove{\State $G \leftarrow$ grid of $\alpha$ values in $(0,1)$}
		\For{$b$ in $1,\ldots,B$}
		\State Draw $U_1^{(b)},\ldots,U_n^{(b)} \sim \textrm{Unif}[0,1]$.
		\For{$\alpha$ in $G$}
		\State Compute indicators $ \{ W^{(b)}_{\alpha,i} = \I( U_i^{(b)} < \alpha) \}_{i=1}^n $
		\State Train regression method $\widehat r_{\alpha}^{(b)}$ on $\{\X_i, W^{(b)}_{\alpha,i}\}_{i=1}^n$
		\EndFor
		%\State Compute $\widehat r_{\alpha}^{(b)}(\x)$\rafael{is this right? (or it is supposed to be line 6. only)}\annlee{I think we can delete this line}
		\EndFor
		\State \codecomment{Compute %\remove{$(1-\eta)$}
			confidence band:} %\remove{for local PIT histogram $H(\x)$ under $H_0$}:}
		\For{$b$ in $1,\ldots,B$}
		\State Create histogram $H^{(b)}(\x)$ of $\left\{\widehat r_{\alpha}^{(b)}(\x) \right\}_{\alpha \in G}$ values %\add{
		by dividing $[0,1]$ into  $n_{\text{bin}}$ equal-sized bins %\remove{in $[0,1]$ with $n_{\text{bin}}$ bins.}
		\EndFor
		
		\State  $L(\x) \leftarrow \frac{\eta}{2}$-quantile of $\{H^{(b)}(\x) \}_{b=1}^B$%\rafael{$\{H^{(b)}(\x) \}_{b=1}^B$}
		\State   $U(\x) \leftarrow (1-\frac{\eta}{2})$-quantile of $\{  H^{(b)}(\x) \}_{b=1}^B$%\rafael{$\{  H^{(b)}(\x) \}_{b=1}^B$}
		\State \textbf{return} $L(\x), U(\x)$
	\end{algorithmic}
\end{algorithm}

\subsection*{D: EXAMPLE 1: OMITTED VARIABLE BIAS IN CDE MODELS}
\label{app:figures}

In this section we show the results of the local test on Example 1 for model $\widehat f_2$, which passes the global test.

Figure~\ref{fig:toy_example1_goodfit}, right panel, shows p-values from LCTs across the feature space for the model $\widehat{f}_2$. Unlike model $\widehat{f}_1$, which was fit on $X_1$ alone, $\widehat{f}_2$ was fit on both $X_1$ and $X_2$. Hence, $\widehat{f}_2$ is able to pass all tests, with local P-P plots indicating a good fit (with two examples shown in the Figure~\ref{fig:toy_example1_goodfit}, left panel).

\setcounter{figure}{6}
\begin{figure}[!ht]
	\centering
	\includegraphics[width=0.45\textwidth]{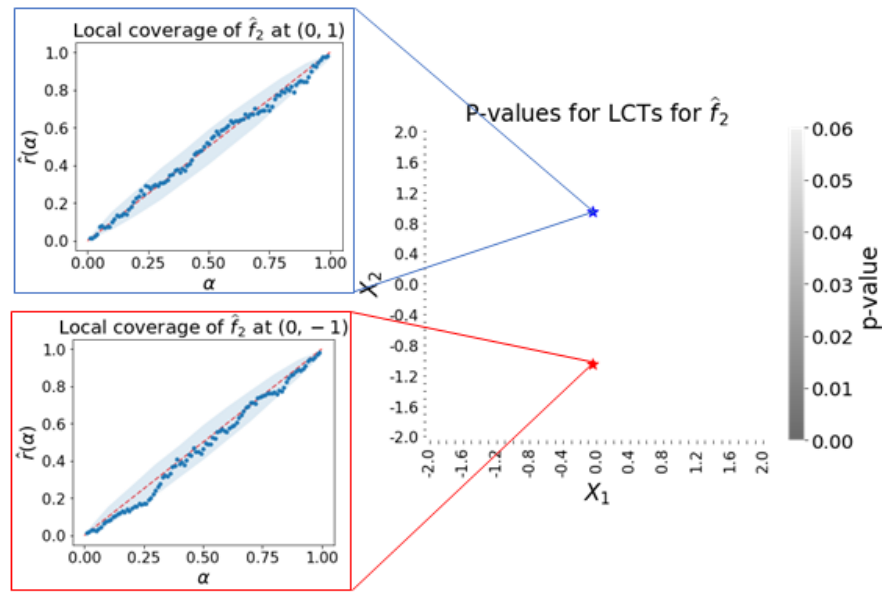}
	\caption{
		{\small P-values for LCTs for $\widehat f_2$ in Example 1 suggest an adequate fit everywhere in the feature space; local coverage plots at selected points also suggest a good fit.}
	}
	\label{fig:toy_example1_goodfit}
\end{figure}

\subsection*{E: Example 2: Conditional Neural Density Modeling for Galaxy Images}

Figure \ref{fig:z_dists} shows the true conditional densities of the simulated ``redshift'' $Z$ vs. the axis ratio $\lambda$ of the corresponding galaxy image.

\begin{figure}[!ht]
	\centering
	\includegraphics[width=0.45\textwidth]{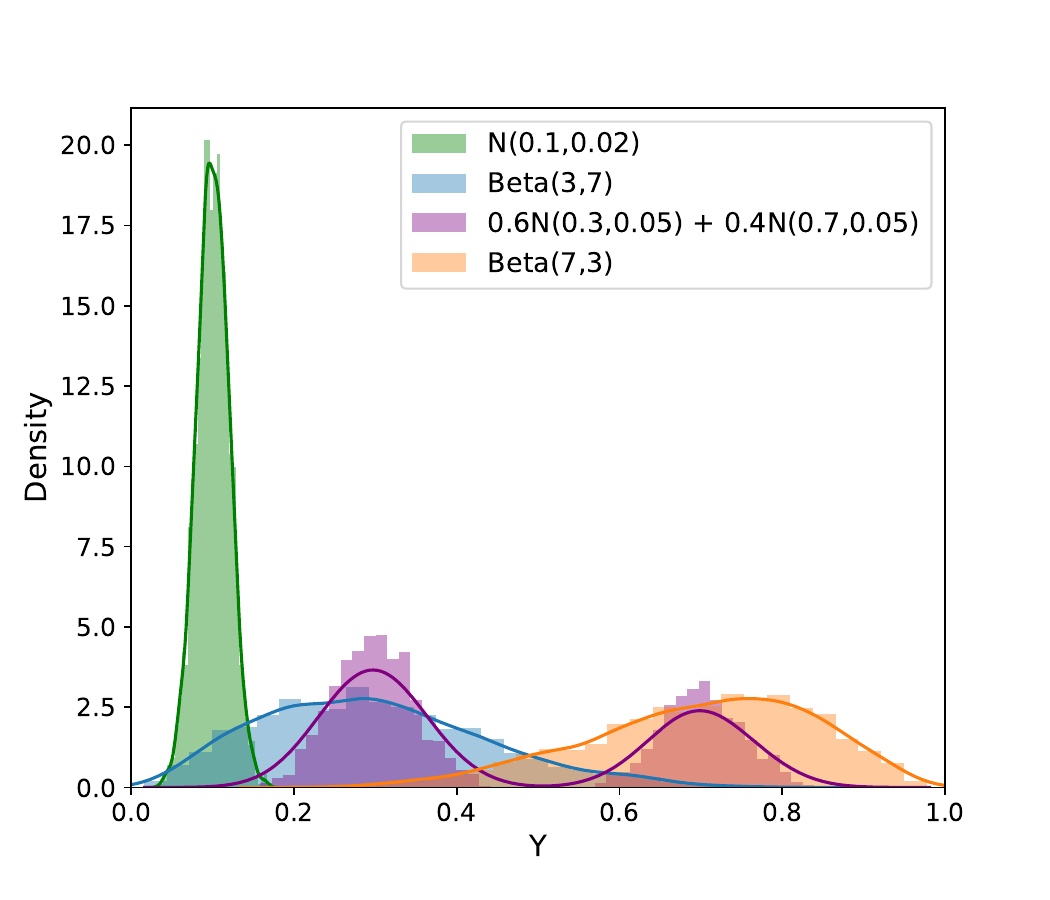}
	\caption{{\small We assign a unimodal distribution of ``redshift'' $Z$ for to the galaxy population with $\lambda=0.8$, and higher, more skewed and bimodal distributions of $Z$ to the populations with $\lambda=0.7,0.6,0.5$.}
	}
	\label{fig:z_dists}
\end{figure}

\subsection*{F: Example 3: Posterior Inference for Galaxy Images}

Table~\ref{tab:KL_loss} reports the KL divergence loss over a test set of 1000 galaxy images for a ConvMDN model with $K$ components, for $K=2,...,10$. The KL loss indicates that  $K=7$ is the optimal choice. However, in the paper we show that this model fails to pass our GCT and therefore is not a good approximation of the true conditional density. Figure~\ref{fig:galsim_example} in the paper also shows how to use our LCTs and P-P plots to diagnose the inadequacies in the fit.

\subsection*{G: Example 4: Conditional density models with multivariate response}
\label{app:multivariate}
For multivariate response $\Y$, we can assess the quality of fit of $\widehat f$ through highest predictive density (HPD) values, as described in Section \ref{sec:multivariate}. Our method still yields interpretable diagnostics, but the interpretation of HPD values differs from that of PIT values. If a local P-P plot shows estimated HPD values $\widehat r_{\alpha}$ that are too high relative to $\alpha$, this suggests that the model is overdispersed relative to the true density. HPD values that are too low could suggest an underdispersed model, or be a symptom of model misspecification: if the estimated density is systematically biased (i.e. not centered at the same location as the true density), the observed values $Y$ will disproportionately represent lower density contours of the true density.

In this example, we draw $\X = (X_1,X_2) \sim \textrm{Unif}[0,1]^2$, and then define a bivariate response $\Y = (Y_1,Y_2)$ as follows:
\begin{align*}
\Y|\X \sim
\begin{cases}
N((X_1,X_2), I_2), & X_2 \in [1,2] \\
N((X_1,X_2), 0.25I_2), & X_2 \in [0,1] \\
t_4 \textrm{ centered at } (X_1,X_2), & X_2 \in [-1,0] \\
t_4 \textrm{ centered at } (X_1+1,X_2+1), & X_2 \in [-2,-1] \\
\end{cases}
\end{align*}
where $I_2$ is the identity matrix. See Figure \ref{fig:true_densities_multivar} for an illustration of how the true conditional density $f(\y|\x)$ varies across the feature space. For illustration, we choose the model $\widehat f(\cdot | \x) = N((x_1,x_2), 1)$ in all four regions. This model perfectly fits the true density when $x_2 \in [1,2]$, and is misspecified in the other cases. We evaluate HPD values at 1000 test points to run our diagnostic framework.

Figure \ref{fig:example4_multivariate} summarizes the results of our diagnostics. First, we perform the GCT, which rejects the global null with $p<0.001$. We then perform LCTs across the feature space for $\X$; the resulting p-values are shown in the center panel. As expected, LCTs indicate a good fit when $\widehat f$ is correct, and a poor fit in most regions where $\widehat f$ is misspecified. Investigating further with local P-P plots enables us to detect overcoverage and undercoverage of HPD regions at specific locations in the feature space. Overcoverage of the true $\Y$ by the HPD region means the $\alpha$-HPD set for $\widehat f$ is too large, so observed HPD values are too low: this indicates that $\widehat f$ is overdispersed locally (as in the top right example). Conversely, undercoverage by the HPD region means the $\alpha$-HPD set for $\widehat f$ does not cover enough of the true density mass of $f$, so observed HPD values are too high: this can be caused by $\widehat f$ being underdispersed or biased locally (as in the bottom right example).

\clearpage

\begin{figure*}[!ht]
	\centering
	\includegraphics[width=0.4\textwidth]{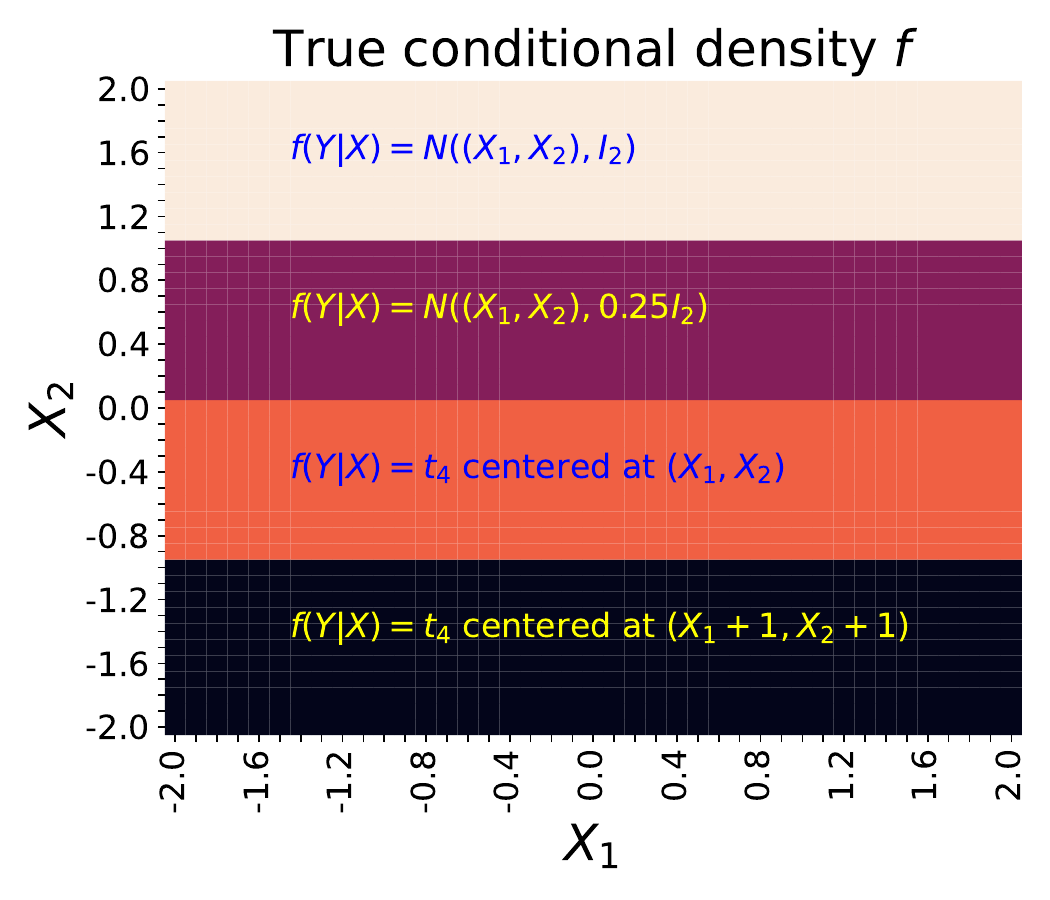}
	\caption{
		\small The true conditional density $f(\y|\x)$ has different forms in four different regions of the feature space, whereas we assume the same model $\widehat{f}(\y|\x) = N((x_1,x_2), 1)$ across feature space. When $X_2 \in [1,2]$, the model $\widehat f$ is correctly specified. When $X_2 \in [0,1]$, $\widehat f$ is overdispersed relative to the true density $f$. When $X_2 \in [-1,0]$, $\widehat f$ is slightly underdispersed relative to the true density $f$. When $X_2 \in [-2,-1]$, $\widehat f$ is both biased and slightly underdispersed relative to the true density $f$.
	}
	\label{fig:true_densities_multivar}
\end{figure*}

\begin{figure*}[!ht]
	\centering
	\includegraphics[width=1\textwidth]{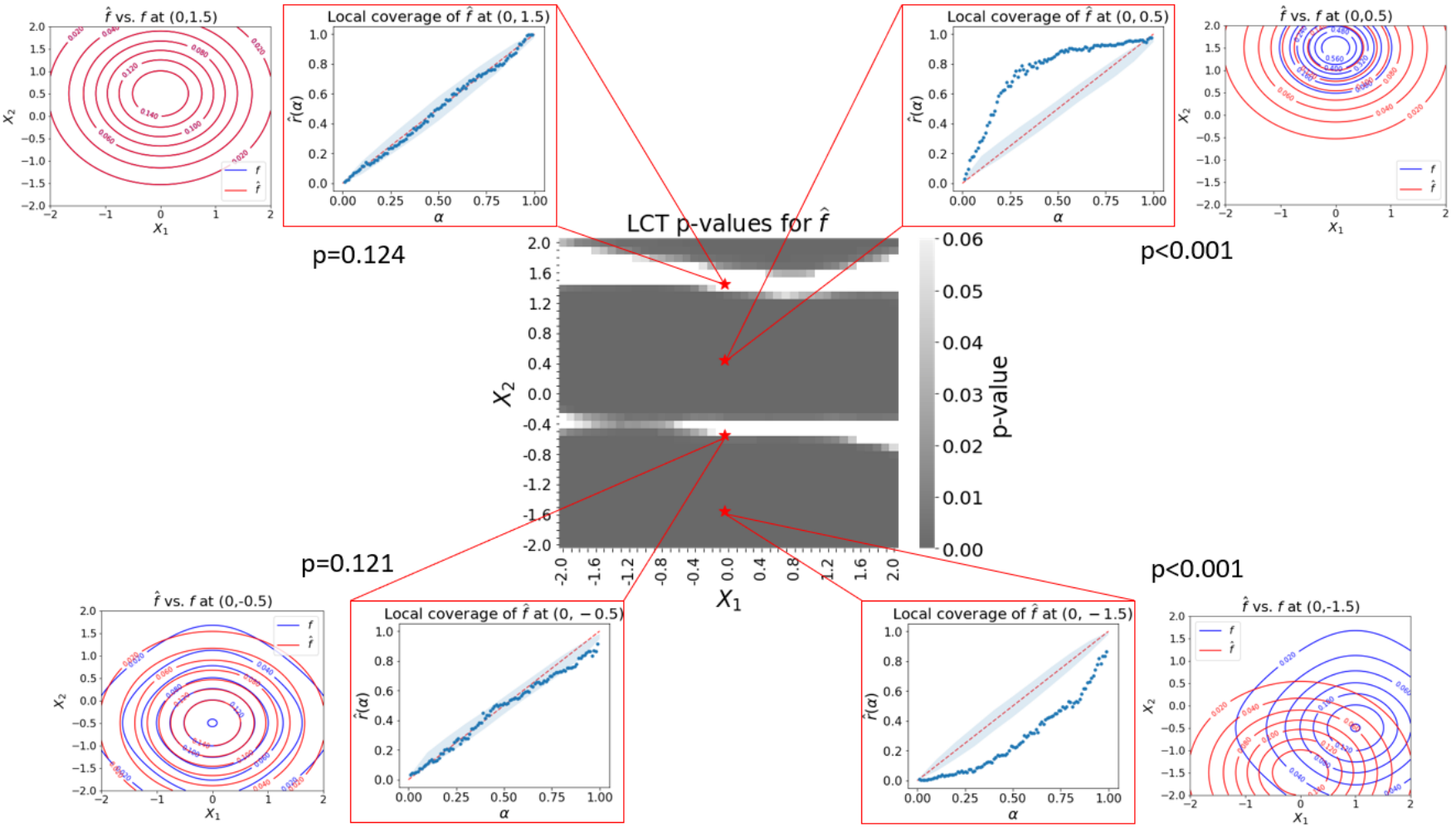}
	\caption{
		{\small New diagnostics for Example 4.
			P-values for LCTs for $\widehat f$ indicate a poor fit for values of $X$ where $X_2 \in [0,1]$ or $X_2 \in [-2,-1]$ (see center panel).
			Amortized local P-P plots at selected points show the HPD level sets of $\widehat f$ as overdispersed for $X_2 \in [0,1]$, and underdispersed or biased for $X_2 \in [-2,-1]$. In contrast, the HPD level sets are well estimated at significance level $\alpha=0.05$ for $X_2 \in [1,2]$ and $X_2 \in [-1,0]$. (Gray regions represent 95\% confidence bands under the null.)
			Contour plots show the model $\widehat f$ vs. the true (unknown) conditional density $f$ at the selected points. $\widehat f$ is clearly overdispersed at $(0,0.5)$ and systematically biased at $(0,-1.5)$. The model perfectly fits the density at $(0,1.5)$, and has barely detectable underdispersion at $(0,-0.5)$.
			(\textit{Note:}
			The contour plots requires knowledge of the true $f$, which would not be available to the practitioner.)}
	}
	\label{fig:example4_multivariate}
\end{figure*}

\end{document}